\ifpdf \usepackage{url} \else \usepackage{breakurl} \fi
\DeclareMathAlphabet{\mathcal}{OMS}{cmsy}{m}{n}
\title{Quantifier Elimination for
Database Driven Verification
}
\author{%
Diego Calvanese$^1$, Silvio Ghilardi$^2$,  Alessandro Gianola$^1$, \\ Marco Montali$^1$, Andrey Rivkin$^1$ 
}
\institute{%
 $^1$Faculty of Computer Science, Free University of Bozen-Bolzano (Italy)\\
\{calvanese, gianola, montali, rivkin\}@inf.unibz.it \\
 $^2$Dipartimento di Matematica, Universit\`a degli Studi di Milano (Italy)\\
 silvio.ghilardi@unimi.it
}
\newcommand{\I}{\ensuremath{\mathcal{I}}}
\newcommand{\M}{\ensuremath{\mathcal{M}}}
\newcommand{\set}[1]{\{#1\}}                      
\newcommand{\tup}[1]{\langle #1\rangle}            
\newcolumntype{C}{>{\centering\arraybackslash}X}
\g@addto@macro\normalsize{%
\setlength{\abovecaptionskip}{-2pt}
\setlength\abovedisplayskip{3pt}
\setlength\belowdisplayskip{3pt}
\setlength\abovedisplayshortskip{3pt}
\setlength\belowdisplayshortskip{3pt}
}
\newcounter{dummy} 
\newcounter{dummy1} 
\newcounter{dummy2}
\newcounter{dummy3} 
\newcounter{dummy4}
\newcounter{dummy5} 
\newcounter{dummy6}
\newtheorem{theorem}[dummy]{Theorem}
\newtheorem{lemma}[dummy1]{Lemma}
\newtheorem{definition}[dummy2]{Definition}
\newtheorem{proposition}[dummy3]{Proposition}
\newtheorem{example}[dummy4]{Example}
\newtheorem{remark}[dummy5]{Remark}
\theoremstyle{nonumberplain}
\newtheorem{proof}[dummy6]{Proof}
\newcommand{\ua}{\ensuremath{\underline a}}
\newcommand{\ub}{\ensuremath{\underline b}}
\newcommand{\ue}{\ensuremath{\underline e}}
\newcommand{\ux}{\ensuremath{\underline x}}
\newcommand{\uy}{\ensuremath{\underline y}}
\newcommand{\uz}{\ensuremath{\underline z}}
\newcommand{\cM}{\ensuremath \mathcal M}
\newcommand{\cN}{\ensuremath \mathcal N}
\newcommand{\cS}{\ensuremath \mathcal S}
\newcommand{\cI}{\ensuremath \mathcal I}
\newcommand{\lra}{\longrightarrow}
\newcommand{\mcmt}{\textsc{mcmt}\xspace}
\renewcommand{\int}{\ensuremath {\mathcal I}}
\newcommand{\domain}[1]{\mathit{dom}(#1)}
\newcommand{\constant}[1]{\texttt{#1}}
\newcommand{\functs}[1]{#1_{\mathit{fun}}}
\newcommand{\vals}[1]{#1_{\mathit{val}}}
\newcommand{\ids}[1]{#1_{\mathit{ids}}}
\newcommand{\nullv}{\texttt{undef}}
\newcommand{\lentry}[2]{\ensuremath{#1\,{:}\,#2}}
\definecolor{deepblue}{HTML}{0C3B80}
\definecolor{deepgreen}{HTML}{2EA601}
\definecolor{lightOrange}{HTML}{FFA03C}
\definecolor{darkOrange}{HTML}{F1800A}
\definecolor{lightBlue}{HTML}{0174CD}
\definecolor{greenF}{HTML}{2CBB5C}
\definecolor{cyan}{HTML}{86A6D5}
\tikzstyle{sortnode} = [
\tikzstyle{functnode} = [
\tikzstyle{idnode} = [
\tikzstyle{valnode} = [
\tikzstyle{f} = [
\tikzstyle{fd} = [
\tikzstyle{relation}=[rectangle split, rectangle split parts=#1, rectangle split part align=base, draw, anchor=center, align=center, text height=3mm, font=\bfseries, text centered]
\tikzstyle{sortnode} = [
\tikzstyle{functnode} = [
\tikzstyle{idnode} = [
\tikzstyle{artnode} = [
\tikzstyle{valnode} = [
\tikzstyle{f} = [
\tikzstyle{fd} = [
\tikzstyle{relation}=[rectangle split, rectangle split parts=#1, rectangle split part align=base, draw, anchor=center, align=center, text height=3mm, font=\bfseries, text centered]
\begin{document}



\maketitle


\begin{abstract}
Running verification tasks in database driven systems requires solving quantifier elimination 
problems  of a new kind. These quantifier elimination problems are related to the notion of a cover introduced in ESOP 
2008 by Gulwani and Musuvathi.

In this paper,  we show how covers are strictly related to model completions, a well-known topic in 
model theory. We also investigate the computation of covers within the Superposition Calculus, by adopting a constrained version of the calculus, equipped with appropriate settings and reduction strategies.

In addition, we show that cover computations are computationally tractable for the fragment of the language used in applications to database driven verification.
This observation 
is confirmed by analyzing the preliminary results obtained using the \mcmt tool 
on the verification of data-aware process benchmarks. These benchmarks can be found in the last version of 
the tool distribution.  
\end{abstract}

\section{Introduction}

Declarative approaches to infinite state model checking~\cite{RV} need to manipulate logical formulae in order to represent sets of reachable states. To prevent divergence, various abstraction strategies have been adopted, ranging from interpolation-based~\cite{McM} to sophisticated search via counterexample elimination~\cite{pdr}. Precise computations of the set of reachable states  require some form of quantifier elimination and hence are subject to two problems, namely  that quantifier elimination might not be available at all and that, when available, it is computationally very expensive.

To cope with the first problem, \cite{GM} introduced the notion of a \emph{cover} and proved that covers exist for equality with uninterpreted symbols (EUF) and its combination with linear arithmetic; also, 
it was shown that  covers can 
be used instead of quantifier elimination and yield a precise computation of reachable states. Concerning the second problem, 
in~\cite{GM} it was observed (as a side remark) that  computing the cover of a conjunction of literals 
becomes tractable
when only free unary function symbols occur in the signature. It can be shown 
(see Section~\ref{app:complexity} below) 
that the same observation applies when also free relational symbols occur. 

In~\cite{CGGMR18,CGGMR19} we propose a new formalism for representing \emph{read-only database schemata} towards the verification of integrated models of processes and data \cite{CaDM13}, in particular so-called \emph{artifact systems}~\cite{Vian09,DHPV09,verifas,boj}; this formalism (briefly recalled in Section~\ref{sec:readonly} below) uses precisely signatures comprising unary function symbols and free $n$-ary relations. In~\cite{CGGMR18,CGGMR19} we apply model completeness techniques for verifying  transition systems based on read-only databases,  in a framework where such systems employ  both individual and higher order variables.

 In this paper we show (see Section~\ref{sec:covers} below) that covers are strictly related to \emph{model completions} and to \emph{uniform interpolation}~\cite{pitts}, thus building a bridge between different research areas. 
 In particular, we prove that computing covers for a theory is \emph{equivalent} to eliminating quantifiers in its model completion.
Model completeness has other well-known applications in computer science. It has been applied:
\begin{inparaenum}[\it (i)]
  \item to reveal interesting connections between temporal logic and monadic second order logic~\cite{GvG16,GvG17};
  \item in automated reasoning to design complete algorithms for constraint satisfiability in combined theories over non disjoint signatures~\cite{Ghil05,BGT06,GNZ08,NRR10,NRR09,NRRtac09} and theory extensions~\cite{SS16,Stokker18}; 
  \item to obtain combined interpolation for modal logics and software verification theories~\cite{GG17,GG18}.
\end{inparaenum}

In the last part of the paper (Section~\ref{sec:SC} below), we prove  that covers for (EUF) can be computed through a constrained version of the \emph{Superposition Calculus}~\cite{NR} equipped with appropriate settings and reduction strategies; the related completeness proof requires a careful analysis of the constrained literals generated during the saturation process. Complexity bounds for the fragment used in database driven verification are investigated in Section~\ref{app:complexity}; in Subsection~\ref{subsec:mcmt} we give some details about our first implementation in our tool \textsc{mcmt}.

This paper is the extended version of~\cite{CADE19}.

\section{Preliminaries}
\label{sec:prelim}

We adopt the usual first-order syntactic notions of signature, term,
atom, (ground) formula,  and so on; our signatures are multi-sorted and include equality for every sort.
%
This implies that variables are sorted as well. 
For simplicity, most basic definitions in this Section will be supplied for single-sorted languages only (the adaptation to  multi-sorted languages is straightforward).
We compactly represent a tuple $\tup{x_1,\ldots,x_n}$ of variables as $\ux$. The notation $t(\ux), \phi(\ux)$ means that the term $t$, the formula $\phi$ has free variables included in the tuple $\ux$.

We assume that a function arity can be deduced from the context. 
Whenever we build terms and formulae, we always assume that they are well-typed, in the sense that the sorts of variables, constants, and function sources/targets match.
A formula is said to be \emph{universal} (resp., \emph{existential}) if it has the form $\forall \ux (\phi(\ux))$ (resp., $\exists \ux (\phi(\ux))$), where $\phi$ is a quantifier-free formula. Formulae with no free variables are called \emph{sentences}. 

From the semantic side, we use the standard notion of a $\Sigma$-structure $\cM$ and of truth of a formula in a $\Sigma$-structure under a free variables assignment. 

A \emph{$\Sigma$-theory} $T$ is a set of $\Sigma$-sentences; a \emph{model}  of $T$ is a $\Sigma$-structure $\cM$ where all sentences in $T$ are true.
	 We use the standard notation $T\models \phi$ to say that $\phi$ is true in all models of $T$ for every assignment to the variables occurring free in $\phi$. We say that $\phi$ is \emph{$T$-satisfiable} iff there is a model $\cM$ of $T$ and an assignment to the variables occurring free in $\phi$ making $\phi$ true in $\cM$.
	 
	 We give now the definitions of constraint satisfiability problem and quantifier elimination for a theory $T$.
	 
	 A $\Sigma$-formula $\phi$ is a $\Sigma$-\emph{constraint} (or just a constraint) iff it is a conjunction of literals.
	The \emph{constraint satisfiability problem} for $T$ is the following: we are given a constraint (equivalently, a quantifier-free formula) $\phi(\ux)$
	and we are asked whether there exist  a model $\cM$ of $T$ and an assignment $\cI$ to the free variables $\ux$ such that $\cM, \cI \models \phi(\ux)$.

	 A theory $T$ has \emph{quantifier elimination} iff for every formula $\phi(\ux)$ in the signature of $T$ there is a quantifier-free formula 
	$\phi'(\ux)$ such that $T\models \phi(\ux)\leftrightarrow \phi'(\ux)$. It is well-known (and easily seen) that quantifier elimination holds in case we can eliminate quantifiers from \emph{primitive} formulae, i.e. from formulae of the kind $\exists \uy \,\phi(\ux, \uy)$, where $\phi$ is a conjunction of literals (i.e. of atomic formulae and their negations). Since we are interested in effective computability, we assume that when we talk about quantifier elimination, an effective procedure for eliminating quantifiers is given.

 We recall also some basic definitions and notions from logic and model theory. We focus on the definitions of diagram, embedding, substructure and amalgamation.
 
 \subsection{Substructures and embeddings}
 
 Let $\Sigma$ be a first-order signature. The signature
 obtained from $\Sigma$ by adding to it a set $\ua$ of new constants
 (i.e., $0$-ary function symbols) is denoted by $\Sigma^{\ua}$.
Analogously, given a $\Sigma$-structure $\cM$, the signature $\Sigma$ can be expanded to a new signature $\Sigma^{|\cM|}:=\Sigma\cup \{\bar{a}\ |\ a\in |\cM| \}$ by adding a set of new constants $\bar{a}$ (the \textit{name} for $a$), one for each element $a$ in $\cM$, with the convention that two distinct elements are denoted by different "name" constants. $\cM$ can be expanded to a $\Sigma^{|\cM|}$-structure $\overline{\cM}:=(\cM, a)_{a\in \vert\cM\vert}$ just interpreting the additional constants over the corresponding elements. From now on, when the meaning is clear from the context, we will freely use the notation  $\cM$ and  $\overline{\cM}$ interchangeably: in particular, given a $\Sigma$-structure
$\cM$ 
and a $\Sigma$-formula $\phi(\ux)$ with free variables that are all in $\ux$, we will write, by abuse of notation, 
$\cM\models \phi(\ua)$ instead of $\overline{\cM}\models \phi(\bar{\ua})$.


A {\it $\Sigma$-homomorphism} (or, simply, a
homomorphism) between two $\Sigma$-structu\-res $\cM$ and
$\cN$ is any mapping $\mu: \vert \cM \vert \lra \vert \cN\vert $ among the
support sets $\vert \cM \vert $ of $\cM$ and $\vert \cN \vert$  of $\cN$ satisfying the condition
\begin{equation}\label{eq:emb}
\cM \models \varphi \quad \Rightarrow \quad \cN \models \varphi
\end{equation}
for all $\Sigma^{\vert \cM\vert}$-atoms $\varphi$ (here $\cM$ is regarded as a
$\Sigma^{\vert \cM\vert}$-structure, by interpreting each additional constant $a\in
\vert \cM\vert $ into itself and $\cN$ is regarded as a $\Sigma^{\vert \cM\vert}$-structure by
interpreting each additional constant $a\in \vert \cM\vert $ into $\mu(a)$). 
In case condition \eqref{eq:emb} holds for all $\Sigma^{|\cM|}$-literals,
the homomorphism $\mu$ is said to be an {\it
	embedding} and if it holds for all 
first order
formulae, the embedding $\mu$ is said to be {\it
	elementary}. 
Notice the following facts: 
\begin{description}
	\item[{\rm (a)}] since we have equality in the
	signature, an embedding is an injective function; 
	\item[{\rm (b)}] an embedding $\mu:\cM\longrightarrow \cN$ must be an algebraic homomorphism, 
	that is for every $n$-ary function symbol $f$ and for every $m_1,..., m_n$ in $|\cM|$, we must have
	$f^{\cN}(\mu(m_1), ... , \mu(m_n)) = \mu(f^{\cM}(m_1, ... , m_n))$; 
	\item[{\rm (c)}] for an $n$-ary predicate symbol $P$ we must have $(m_1, ... , m_n) \in P^{\cM}$ iff  $(\mu(m_1), ... , 
	\mu(m_n)) \in P^{\cN}$.
\end{description}
It is easily seen that an embedding  $\mu:\cM\longrightarrow \cN$ can be equivalently
defined as a map  $\mu:\vert\cM\vert\longrightarrow\vert \cN\vert$ satisfying the conditions (a)-(b)-(c) above.
If  $\mu: \cM \lra \cN$ is an embedding which is just the
identity inclusion $\vert \cM\vert\subseteq\vert \cN\vert$, we say that $\cM$ is a {\it
	substructure} of $\cN$ or that $\cN$ is an {\it extension} of
$\cM$. A $\Sigma$-structure $\cM$ is said to be \emph{generated by}
a set $X$ included in its support $\vert \cM\vert$ iff there are no proper substructures of $\cM$ including $X$.

The notion of substructure can be equivalently defined as follows: given a $\Sigma$-structure $\cN$ and a $\Sigma$-structure $\cM$ such that $|\cM| \subseteq |\cN|$, we say that $\cM$ is a $\Sigma$-\textit{substructure} of $\cN$ if:

\begin{itemize}
	\item for every function symbol $f$ inf $\Sigma$, the interpretation of $f$ in $\cM$ (denoted using $f^{\cM}$) is the restriction of the interpretation of $f$ in $\cN$ to $|\cM|$ (i.e. $f^{\cM}(m)=f^{\cN}(m)$ for every $m$ in $|\cM|$); this fact implies that a substructure $\cM$ must be a subset of $\cN$ which is closed under the application of $f^{\cN}$.
	\item  for every relation symbol $P$ in $\Sigma$ and every tuple $(m_1,...,m_n)\in |\cM|^{n}$, $(m_1,...,m_n)\in P^{\cM}$ iff $(m_1,...,m_n)\in P^{\cN}$, which means that the relation $P^{\cM}$ is the restriction of $P^{\cN}$ to the support of $\cM$.
\end{itemize} 
 
We recall that a substructure \textit{preserves} and \textit{reflects} validity of ground formulae, in the following sense:
given a $\Sigma$-substructure $\cM_1$ of a $\Sigma$-structure $\cM_2$, a ground $\Sigma^{\vert\cM_1\vert}$-sentence $\theta$ is true in $\cM_1$ iff $\theta$ is true in $\cM_2$.

\subsection{Robinson Diagrams and Amalgamation}

Let $\cM$ be a $\Sigma$-structure. The \textit{diagram} of $\cM$, denoted by $\Delta_{\Sigma}(\cM)$, is defined as the set of ground $\Sigma^{|\cM|}$-literals (i.e. atomic formulae and negations of atomic formulae) that are true in $\cM$.  

An easy but nevertheless important basic result, called 
\emph{Robinson Diagram Lemma}~\cite{CK},
says that, given any $\Sigma$-structure $\cN$,  the embeddings $\mu: \cM \longrightarrow \cN$ are in bijective correspondence with
expansions of $\cN$ 
to   $\Sigma^{\vert \cM\vert}$-structures which are models of 
$\Delta_{\Sigma}(\cM)$. The expansions and the embeddings are related in the obvious way: $\bar a$ is interpreted as $\mu(a)$.

Amalgamation is a classical algebraic concept. We give the formal definition of this notion.
\begin{definition}[Amalgamation]
	\label{def:amalgamation}
	A theory $T$ has the \emph{amalgamation property} if for every couple of embeddings $\mu_1:\cM_0\longrightarrow\cM_1$, $\mu_2:\cM_0\longrightarrow\cM_2$ among models of $T$, there exists a model $\cM$ of
	$T$ endowed with embeddings $\nu_1:\cM_1 \longrightarrow \cM$ and
	$\nu_2:\cM_2 \longrightarrow \cM$ such that $\nu_1\circ\mu_1=\nu_2\circ\mu_2$
	\begin{center}
		\begin{tikzcd}
			& \cM \arrow[rd, leftarrow, "\nu_2"] & \\
			\cM_{1} \arrow[ru, "\nu_1"] \arrow[rd, leftarrow, "\mu_1"]	& & \cM_{2} &\\
			& \cM_{0} \arrow[ur, "\mu_2"] &
		\end{tikzcd}
	\end{center}
\end{definition}
The triple   $(\cM, \mu_1,\mu_2)$ (or, by abuse,   $\cM$  itself) is said to be a $T$-amalgama of  $\cM_1,\cM_2$ over $\cM_0$

\section{Covers, Uniform Interpolation and Model Completions}\label{sec:covers}

We report the notion of \emph{cover} taken from~\cite{GM}. 
Fix a  theory $T$ and an existential formula $\exists \ue\, \phi(\ue, \uy)$; call a \emph{residue} of $\exists \ue\, \phi(\ue, \uy)$ any quantifier-free formula belonging to the set of quantifier-free formulae  $Res(\exists \ue\, \phi)=\{\theta(\uy, \uz)\mid T \models \phi(\ue, \uy) \to \theta(\uy, \uz)\}$. 
A quantifier-free formula $\psi(\uy)$ is said to be a \emph{$T$-cover} (or, simply, a \emph{cover}) of $\exists \ue\, \phi(\ue,\uy)$ iff  $\psi(\uy)\in Res(\exists \ue\, \phi)$ and $\psi(\uy)$ implies (modulo $T$) all the other formulae in $Res(\exists \ue\, \phi)$.
The following Lemma (to be widely used throughout the paper) supplies a semantic counterpart to the notion of a cover:

\begin{lemma}\label{lem:cover} A formula $\psi(\uy)$ is a $T$-cover of $\exists \ue\, \phi(\ue, \uy)$ iff 
it satisfies the following two conditions:
\begin{inparaenum}[(i)]
\item $T\models  \forall \uy\,( \exists \ue\,\phi(\ue, \uy) \to \psi(\uy))$;
\item for every model $\cM$ of $T$, for every tuple of  elements $\ua$ from the support of $\cM$ such that $\cM\models \psi(\ua)$ it is possible to find
  another model $\cN$ of $T$ such that $\cM$ embeds into $\cN$ and $\cN\models \exists \ue \,\phi(\ue, \ua)$.
\end{inparaenum}
\end{lemma}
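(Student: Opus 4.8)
The plan is to prove the two implications separately, relying on the Robinson Diagram Lemma, on compactness, and on the fact (recalled just above) that a substructure preserves and reflects the truth of quantifier-free (ground) formulae.

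First I would assume that $\psi(\uy)$ is a $T$-cover and derive (i) and (ii). Condition (i) is immediate: since $\psi\in Res(\exists\ue\,\phi)$, by definition $T\models\phi(\ue,\uy)\to\psi(\uy)$, and because $\ue$ does not occur in $\psi$ I may existentially quantify the antecedent, obtaining $T\models\forall\uy\,(\exists\ue\,\phi(\ue,\uy)\to\psi(\uy))$. For (ii) I would argue by contradiction. Fix a model $\cM\models T$ and a tuple $\ua$ with $\cM\models\psi(\ua)$, and suppose no extension $\cN\models T$ of $\cM$ satisfies $\exists\ue\,\phi(\ue,\ua)$. By the Robinson Diagram Lemma, the models of $T\cup\Delta_\Sigma(\cM)$ are exactly the models of $T$ in which $\cM$ embeds; hence the non-existence of $\cN$ means that $T\cup\Delta_\Sigma(\cM)\cup\{\phi(\uc,\ua)\}$ is inconsistent, where $\uc$ are fresh constants standing for the existential witnesses $\ue$.

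Now I apply compactness: already a finite conjunction $\delta(\ua,\ub)$ of literals from $\Delta_\Sigma(\cM)$ — where $\ub$ names the finitely many further elements of $\cM$ occurring in it — yields $T\cup\{\delta(\ua,\ub),\phi(\uc,\ua)\}\models\bot$. Reading the constants $\ua,\ub,\uc$ as universally quantified variables, this says $T\models\phi(\ue,\uy)\to\neg\delta(\uy,\ub)$, so that $\neg\delta(\uy,\ub)$ is a residue of $\exists\ue\,\phi$ (with $\ub$ playing the role of the extra parameters $\uz$). Since $\psi$ is a cover it implies this residue modulo $T$, i.e. $T\models\psi(\uy)\to\neg\delta(\uy,\ub)$; instantiating $\uy$ at $\ua$ and $\ub$ at the actual elements they name, and using $\cM\models\psi(\ua)$, I obtain $\cM\models\neg\delta(\ua,\ub)$. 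This contradicts the fact that $\delta(\ua,\ub)$ is part of the diagram of $\cM$ and is therefore true in $\cM$; hence the extension $\cN$ exists and (ii) holds. Conversely, assume (i) and (ii). That $\psi\in Res(\exists\ue\,\phi)$ follows from (i) exactly as before (from $\phi(\ue,\uy)$ one gets $\exists\ue\,\phi(\ue,\uy)$, hence $\psi(\uy)$). For maximality, let $\theta(\uy,\uz)$ be an arbitrary residue, so $T\models\phi(\ue,\uy)\to\theta(\uy,\uz)$, and take any $\cM\models T$ with elements $\ua,\uc$ interpreting $\uy,\uz$ and $\cM\models\psi(\ua)$. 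By (ii) there is $\cN\models T$ with $\cM$ (identified with a substructure of $\cN$) embedded into $\cN$ and a witness in $\cN$ making $\phi(\,\cdot\,,\ua)$ true, so $\cN\models\theta(\ua,\uc)$. As $\theta$ is quantifier-free and $\ua,\uc$ lie in $|\cM|$, reflection of quantifier-free truth from $\cN$ to $\cM$ gives $\cM\models\theta(\ua,\uc)$. Hence $T\models\psi(\uy)\to\theta(\uy,\uz)$ for every residue, so $\psi$ is a cover.

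I expect the main obstacle to be condition (ii) of the forward direction, whose essence is to convert the non-existence of an amalgamating extension into a single residue. This requires the Robinson Diagram Lemma to recast ``extension of $\cM$'' as ``model of $\Delta_\Sigma(\cM)$'', a compactness step to replace the whole diagram by a finite conjunction $\delta$, and careful bookkeeping that turns the freshly introduced constants back into the parameter variables $\uy,\uz$, so that $\neg\delta$ is genuinely a residue and the cover hypothesis can be invoked.
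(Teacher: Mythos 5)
Your proposal is correct and follows essentially the same route as the paper's own proof: the hard direction (cover implies condition (ii)) is handled, exactly as in the paper, by contradiction via the Robinson Diagram Lemma plus compactness, turning a finite piece of the diagram into a residue that the cover must imply; the converse direction uses condition (ii) together with preservation/reflection of quantifier-free formulae along substructures, again as in the paper. The only differences are cosmetic (order of the two implications and the notational bookkeeping of constants versus variables), so there is nothing to change.
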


\begin{proof}
\textit{Suppose that $\psi(\uy)$ satisfies conditions (i) and (ii) above}. 
Condition (i) says that $\psi(\uy)\in Res(\exists \ue\, \phi)$, so $\psi$ is a residue.
 In order to show that $\psi$ is also a cover, we have to prove that $T\models \forall \uy,\uz (\psi(\uy)\to \theta(\uy,\uz))$, for every $\theta(\uy,\uz)$ that is a residue for $\exists \ue\, \phi(\ue, \uy)$.
 Given a model $\cM$ of $T$, take a pair of tuples $\ua, \ub$ of elements from $|\cM|$ and suppose that $\cM\models \psi(\ua)$. By condition (ii), there is a model $\cN$ of $T$ such that $\cM$ embeds into $\cN$ and $\cN\models \exists \ue \phi(\ue, \ua)$. Using the definition of $Res(\exists \ue\, \phi)$, we have $\cN\models \theta(\ua,\ub)$, since $\theta(\uy,\uz)\in Res(\exists \ux\, \phi)$. Since $\cM$ is a substructure of $\cN$ and $\theta$ is quantifier-free, $\cM\models \theta(\ua,\ub)$ as well, as required.

\textit{Suppose that $\psi(\uy)$ is a cover}. The definition of residue implies condition (i). To show condition (ii) we have to prove that, given a model $\cM$ of $T$, for every tuple $\ua$ of elements from $|\cM|$, if $\cM\models \psi(\ua)$, then there exists a model $\cN$ of $T$ such that $\cM$ embeds into $\cN$ and $\cN\models \exists \ux \phi(\ux, \ua)$. By reduction to absurdity, suppose that this is not the case: this is equivalent (by using Robinson Diagram Lemma) to the fact that $\Delta(\cM)\cup \{ \phi(\ue, \ua) \}$ is a $T$-inconsistent $\Sigma^{|\cM|\cup\{\ue\}}$-theory. By compactness, there is a finite number of literals $\ell_1(\ua,\ub),...,\ell_m(\ua,\ub)$ (for some tuple $\ub$ of elements from $|\cM|$) such that $\cM\models \ell_i$ (for all $i=1,\dots,m$) and  $T\models\phi(\ue, \ua)\to \neg (\ell_1(\ua,\ub)\land\cdots\land \ell_m(\ua, \ub))$, which means that 
        $T\models \phi(\ue, \uy)\to (\neg \ell_1(\uy,\uz)\lor \cdots \lor \neg \ell_m(\uy,\uz))$, i.e. that 
        $T\models \exists\ue\,\phi(\ue, \uy)\to (\neg \ell_1(\uy,\uz)\lor \dots \lor \neg \ell_m(\uy,\uz))$. By definition of residue, clearly $(\neg \ell_1(\uy,\uz)\lor \dots \lor \neg \ell_m(\uy,\uz)) \in Res(\exists \ux\, \phi)$; then, since $\psi(\uy)$ is a cover, $T\models \psi(\uy)\to (\neg \ell_1(\uy,\uz)\lor \dots \lor \neg \ell_m(\uy,\uz))$, which implies that $\cM \models \neg \ell_j(\ua,\ub)$ for some $j=1,\dots,m$, which is a contradiction. Thus, $\psi(\uy)$ satisfies conditions (ii) too. 
\end{proof}

We say that a theory $T$ has \emph{uniform quantifier-free interpolation} iff every existential formula $\exists \ue\, \phi(\ue,\uy)$ (equivalently, every primitive formula $\exists \ue\, \phi(\ue,\uy)$) has a $T$-cover.

It is clear that if $T$ has uniform quantifier-free interpolation, then it has ordinary quantifier-free interpolation~\cite{BGR14}, in the sense that if we have $T\models \phi(\ue, \uy)\to \phi'(\uy, \uz)$ (for quantifier-free formulae $\phi, \phi'$), then there is a quantifier-free formula $\theta(\uy)$ such that  $T\models \phi(\ue, \uy)\to \theta(\uy)$ and $T\models \theta(\uy)\to \phi'(\uy, \uz)$. In fact, if $T$ has uniform quantifier-free interpolation, then the interpolant $\theta$ is independent on $\phi'$ (the  same $\theta(\uy)$ can be used as interpolant for all entailments $T\models \phi(\ue, \uy)\to \phi'(\uy, \uz)$, varying $\phi'$).

 We say that a \emph{universal} theory $T$ has a \emph{model completion} iff there is a stronger theory $T^*\supseteq T$ (still within the same signature $\Sigma$ of $T$) such that (i) every $\Sigma$-constraint that is satisfiable
	in a model of $T$ is satisfiable in a model of $T^*$; (ii) $T^*$ eliminates quantifiers.
	Other equivalent definitions are possible~\cite{CK}: for instance, (i) is equivalent to the fact that $T$ and $T^*$ prove the same quantifier-free formulae or again to the fact that every model of $T$ can be embedded into a model of $T^*$.
	We recall that the model completion, if it exists, is unique and that its existence implies the amalgamation property for $T$~\cite{CK}. The relationship between uniform interpolation in a propositional logic and model completion 
	of the the equational theory 
	of the variety algebraizing it was extensively studied in~\cite{GZ}. In  the context of first order theories, we 
	prove an even more direct connection:

\begin{theorem}\label{prop:qe} Suppose that $T$ is a universal theory. Then $T$ has a model completion $T^*$ iff $T$ has uniform quantifier-free interpolation. If this happens, 
  $T^*$ is axiomatized by the infinitely many sentences
  \begin{equation}\label{eq:qe}
  \forall \uy \,(\psi(\uy) \to \exists \ue\, \phi(\ue, \uy))
  \end{equation}
  where $\exists \ue\, \phi(\ue, \uy)$ is a  primitive formula and $\psi$ is a cover of it.
\end{theorem}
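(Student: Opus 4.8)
The plan is to prove the two implications separately, deducing the axiomatization claim for free from the uniqueness of the model completion (recalled above): once I exhibit a model completion whose axioms are the sentences~\eqref{eq:qe}, those sentences must axiomatize \emph{the} model completion.

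I would first treat the easy direction, that a model completion $T^*$ yields uniform quantifier-free interpolation. Fix a primitive formula $\exists\ue\,\phi(\ue,\uy)$; since $T^*$ eliminates quantifiers there is a quantifier-free $\psi(\uy)$ with $T^*\models\exists\ue\,\phi(\ue,\uy)\leftrightarrow\psi(\uy)$, and I claim $\psi$ is a $T$-cover, checked through the two conditions of Lemma~\ref{lem:cover}. Both follow from the two recalled facts about model completions: every model of $T$ embeds into a model of $T^*$, and substructures preserve and reflect quantifier-free (indeed ground) formulae. For condition (i), given $\cM\models T$ and a tuple $\ua$ with $\cM\models\exists\ue\,\phi(\ua)$, I embed $\cM$ into some $\cN\models T^*$, push the existential witness upward, read off $\cN\models\psi(\ua)$ from the equivalence, and reflect the quantifier-free $\psi(\ua)$ back down to $\cM$. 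For condition (ii), given $\cM\models T$ with $\cM\models\psi(\ua)$, I embed $\cM$ into $\cN\models T^*$, lift the quantifier-free $\psi(\ua)$ upward, and apply the equivalence to obtain $\cN\models\exists\ue\,\phi(\ua)$; as $T^*\supseteq T$ this $\cN$ is the model of $T$ required by (ii).

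For the converse, I would \emph{define} $T^*$ to be $T$ together with all sentences~\eqref{eq:qe}, where $\exists\ue\,\phi$ ranges over primitive formulae and $\psi$ over their covers (which exist by uniform interpolation), and then verify the two clauses of the model-completion definition. Quantifier elimination is the routine half: it suffices to eliminate quantifiers from a primitive $\exists\ue\,\phi(\ue,\uy)$, and for a chosen cover $\psi$ we have $T^*\models\exists\ue\,\phi\to\psi$ because $\psi$ is a residue (condition (i) of Lemma~\ref{lem:cover}), while $T^*\models\psi\to\exists\ue\,\phi$ is literally the axiom~\eqref{eq:qe}; hence $T^*\models\exists\ue\,\phi\leftrightarrow\psi$. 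The substantial clause is that every model $\cM$ of $T$ embeds into a model of $T^*$. I would prove this by a chain construction: call a \emph{requirement} a pair consisting of a cover axiom $\forall\uy\,(\psi(\uy)\to\exists\ue\,\phi(\ue,\uy))$ together with a tuple $\ua$ of the current model satisfying $\psi(\ua)$, and note that condition (ii) of Lemma~\ref{lem:cover} fulfils any single requirement inside some extension that is still a model of $T$. Discharging the requirements one at a time and taking unions at limit stages produces, from any model of $T$, an extension (again a model of $T$) satisfying all of its requirements; iterating this $\omega$ times and taking the union yields the desired $\cN$. Two standard facts close the argument: existential formulae are preserved along extensions, so once fulfilled a requirement stays fulfilled; and a union of a chain of models of a \emph{universal} theory is again a model of that theory, so every stage and the final $\cN$ model $T$. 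That $\cN\models T^*$ then follows because any requirement for a tuple $\ua$ already lives in some finite stage $\cM_n$, where $\cM_n\models\psi(\ua)$ by reflecting the quantifier-free $\psi$ downward, and so is fulfilled by $\cM_{n+1}$ and persists to $\cN$.

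I expect the main obstacle to be organising this chain construction so that \emph{every} requirement is eventually met, even though growing the model creates new tuples and hence new requirements; the clean remedy is the two-level iteration above (an inner transfinite chain discharging all requirements of a fixed model, wrapped in an outer $\omega$-chain), invoking universality of $T$ at every limit to remain among models of $T$. I would deliberately avoid trying to realise several requirements simultaneously via the cover of a conjunction, since processing them singly through condition (ii) of Lemma~\ref{lem:cover} already suffices.
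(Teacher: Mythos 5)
Your proposal is correct and follows essentially the same route as the paper's proof: the easy direction extracts a cover from a quantifier-free $T^*$-equivalent of a primitive formula and verifies the two conditions of Lemma~\ref{lem:cover} via embedding into a model of $T^*$ and preservation/reflection of quantifier-free formulae, while the converse takes $T^*$ to be $T$ plus the sentences~\eqref{eq:qe}, gets quantifier elimination immediately from the cover axioms, and establishes the embedding condition by an iterated chain construction that discharges requirements one at a time using Lemma~\ref{lem:cover}(ii), universality of $T$ at unions, and upward persistence of existential formulae. The two-level ($\omega$-indexed outer, transfinite inner) organization of your chain is only a bookkeeping variant of the paper's argument, so nothing essential differs.
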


\begin{proof}
 Suppose first that there is a model completion $T^*$ of $T$ and let $\exists \ue\,\phi(\ue,\uy)$ be a primitive formula. Since $T^*$ eliminates quantifiers, we have 
 $T^*\models \exists \ue\,\phi(\ue,\uy) \leftrightarrow \psi(\uy)$ for some quantifier-free formula $\psi(\uy)$. Since $T$ and $T^*$ prove the same quantifier-free formulae, we have that $\psi(\uy)\in Res(\exists \ue\,\phi)$. If $\theta(\uy,\uz)\in Res(\exists\ue \,\phi)$, then we have 
 $T\models \phi(\ue,\uy)\to \theta(\uy, \uz)$; the same entailment holds in $T^*$ too, where we have $T^*\models \psi(\uy)\to \theta(\uy,\uz)$. Since $\psi(\uy)\to \theta(\uy,\uz)$ is quantifier-free, we have also $T\models \psi(\uy)\to \theta(\uy,\uz)$, showing that $\psi$ is a cover of $\exists \ue\,\phi(\ue,\uy)$.
 Thus $T$ has uniform interpolation, because we found a cover for every primitive formula.

Suppose vice versa that $T$ has uniform interpolation. Let $T^*$ be the theory axiomatized by all the formulae~\eqref{eq:qe} above.
 From {\rm (i)} of Lemma~\ref{lem:cover} and (\ref{eq:qe}) above, we clearly get that $T^{\star}$ admits quantifier elimination: in fact, in order to prove that a theory enjoys quantifier elimination, it is sufficient to eliminate quantifiers from \textit{primitive} formulae (then the quantifier elimination for all formulae can be easily shown by an induction over their complexity). This is exactly what is guaranteed by {\rm (i)} of Lemma~\ref{lem:cover} and (\ref{eq:qe}).
 
 Let $\cM$ be a model of $T$. We show (by using a chain argument) that there exists a model $\cM^{\prime}$ of $T^{\star}$ such that $\cM$ embeds into $\cM^{\prime}$. For every primitive formula $\exists \ue \,\phi(\ue,\uy)$, consider the set $\{(\ua, \exists \ue \,\phi(\ue, \ua))\}$ such that $\cM\models \psi(\ua)$ 
 (where $\psi$ is a cover of $\phi$).
 By Zermelo's Theorem, the set  $\{(\ua, \exists \ue\,\phi(\ue, \ua))\}$ can be well-ordered: let $\{(\ua_i,\exists \ue\,\phi_i(\ue, \ua_i))\}_{i\in I}$ be such a well-ordered set (where $I$ is an ordinal). By transfinite induction on this well-order, we define $\cM_{0}:=\cM$ and, for each $i\in I$, 
 $\cM_{i}$ as the extension of $\bigcup_{j<i}\cM_j$ such that $\cM_{i}\models\exists \ue\, \phi_i(\ue, \uy)$, which exists for {\rm (ii)} of Lemma~\ref{lem:cover}  since $\bigcup_{j<i}\cM_j\models \psi_i(\ua)$ (remember that validity of ground formulae is preserved passing through substructures and superstructures, and $\cM_0\models \psi_i(\ua)$).

 Now we take the chain union $\cM^{1}:=\bigcup_{i\in I} \cM_{i}$: since $T$ is universal, $\cM^{1}$ is again a model of $T$, and it is possible to construct an analogous chain $\cM^{2}$ as done above, starting from $\cM^{1}$ instead of $\cM$. Clearly, we get $\cM_{0}:=\cM\subseteq \cM^{1} \subseteq \cM^2$ by construction. At this point, we iterate the same argument countably many times, so as to define a new chain of models of $T$: $$\cM_{0}:=\cM\subseteq \cM^{1}\subseteq...\subseteq \cM^n\subseteq...$$
 
 Defining $\cM^{\prime}:=\bigcup_n \cM^{n}$, we trivially get that $\cM^{\prime}$ is a model of $T$ such that $\cM\subseteq \cM^{\prime}$ and satisfies all the sentences of type (\ref{eq:qe}). The last fact can be shown using the following finiteness argument.
 
 Fix $\phi, \psi$ as in~\eqref{eq:qe}.
 For every tuple $\ua^{\prime}\in \cM^{\prime}$ such that $\cM^{\prime}\models \psi(\ua^{\prime})$, by definition of $\cM^{\prime}$ there exists a natural number $k$ such that $\ua^{\prime}\in \cM^{k}$: since $\psi(\ua^{\prime})$ is a ground formula, we get that also $\cM^{k}\models \psi(\ua^{\prime})$. Therefore, we consider the step $k$ of the countable chain: there, we have that the pair $(\ua^{\prime},\psi(\ua^{\prime}))$ appears in the enumeration given by the well-ordered set 
 of pairs
 $\{(\ua_i,\exists \ue\, \phi_i(\ue,\ua_i))\}_{i\in I}$ (for some ordinal $I$) such that $\cM^{k}\models \psi_i(\ua)$. Hence, by construction and since $\psi(\ua^{\prime})$ is a ground formula, we have that there exists a $j\in I$ such that
 $\cM^{k}_{j}\models \exists \ue\,\phi(\ue,\ua^{\prime})$.
 In conclusion, since the existential formulae are preserved passing to extensions, we obtain $\cM^{\prime}\models\exists \ue\,\phi(\ue,\ua^{\prime})$, as wanted.
\end{proof}

\section{Model-Checking Applications}

In this section we supply old and new motivations for investigating covers and model completions in view of model-checking applications.
We first report the considerations from~\cite{GM,CGGMR18,CGGMR19} on symbolic model-checking 
via model completions (or, equivalently, via covers) in the basic case where system variables are represented as individual variables (for more advanced applications where system variables are both individual and higher order variables, see~\cite{CGGMR18,CGGMR19}). Similar ideas (`use quantifier elimination  in the
model completion even if $T$ does not allow quantifier elimination') were  used in~\cite{SS16} for
interpolation and symbol elimination.

\begin{definition}\label{def:simpleartifact}
  A (quantifier-free) \emph{transition system}  is a tuple
  \begin{align*}
    \cS ~=~\tup{\Sigma, T, \ux, \iota(\ux), \tau(\ux, \ux')}
  \end{align*}
  where:
  \begin{inparaenum}[\itshape (i)]
  \item $\Sigma$ is a signature and $T$ is a $\Sigma$-theory;
  \item $\ux=x_1, \dots, x_n$ are individual variables;
  \item $\iota(\ux)$ is a quantifier-free formula;
  \item $\tau(\ux, \ux')$ is a quantifier-free formula (here the $\ux'$ are renamed copies of the $\ux$).
  \end{inparaenum}
\end{definition}

\medskip
\noindent
A \emph{safety} formula for a transition system  $\cS$ is a further quantifier-free formula $\upsilon(\ux)$
describing undesired states of $\cS$. We say that $\cS$ is \emph{safe with
respect to} $\upsilon$ if the system has no finite run leading from
$\iota$ to $\upsilon$, i.e.   (formally) if there are no model $\cM$ of $T$ and
 no $k\geq 0$
such that the formula
\begin{equation}\label{eq:smc}
  \iota(\ux^0) \land \tau(\ux^0, \ux^1) \land \cdots \land\tau(\ux^{k-1},
  \ux^k)\land \upsilon(\ux^k)
\end{equation}
is satisfiable in $\cM$ (here $\ux^i$'s are renamed copies of $\ux$). The \emph{safety
 problem} for $\cS$ is the following: \emph{given 
 $\upsilon$,
 decide whether $\cS$ is safe with respect to $\upsilon$}.

 Suppose now that the theory $T$ mentioned in Definition~\ref{def:simpleartifact}(i) is universal, has decidable constraint satisfability problem and admits a model completion $T^*$. 
Algorithm~\ref{alg1} describes the \emph{backward reachability algorithm}  for handling the safety problem for $\cS$ (the dual algorithm working via forward search is described in equivalent terms in~\cite{GM}).  An integral
part of the algorithm is to compute preimages.
For that purpose, for any $\phi_1(\ux,\ux')$ and $\phi_2(\ux)$, we define
$\mathit{Pre}(\phi_1,\phi_2)$ to be the formula
$\exists \ux'(\phi_1(\ux, \ux')\land \phi_2(\ux'))$.
The \emph{preimage} of the set of states described by a state formula
$\phi(\ux)$ is the set of states described by
$\mathit{Pre}(\tau,\phi)$.
 \begin{wrapfigure}[12]{r}{0.45\textwidth}
 \vspace{-22pt}
 \small
\begin{algorithm}[H]
\SetKwProg{Fn}{Function}{}{end}
\Fn{$\mathsf{BReach}(\upsilon)$}{
\setcounter{AlgoLine}{0}
\ShowLn$\phi\longleftarrow \upsilon$;  $B\longleftarrow \bot$\;
\ShowLn\While{$\phi\land \neg B$ is $T$-satisfiable}{
\ShowLn\If{$\iota\land \phi$ is $T$-satisfiable.}
{\textbf{return}  $\mathsf{unsafe}$}
\setcounter{AlgoLine}{3}
\ShowLn$B\longleftarrow \phi\vee B$\;
\ShowLn$\phi\longleftarrow \mathit{Pre}(\tau, \phi)$\;
\ShowLn$\phi\longleftarrow \mathsf{QE}(T^*,\phi)$\;
}
\textbf{return} $(\mathsf{safe}, B)$;}{
\caption{Backward reachability algorithm}\label{alg1}
}
\end{algorithm}
 \end{wrapfigure}
 The
subprocedure $\mathsf{QE}(T^*,\phi)$ in Line~6 applies the quantifier
elimination algorithm of $T^*$ to the existential formula $\phi$.
Algorithm~\ref{alg1} computes iterated preimages of $\upsilon$ and applies to
them quantifier elimination, until a fixpoint is reached or until a set
intersecting the initial states (i.e., satisfying $\iota$) is found.
\textit{Inclusion} (Line~2) and \textit{disjointness} (Line~3)
tests produce proof obligations that can be 
discharged thanks to the fact that $T$ has decidable constraint satisfiability problem.

The proof of Proposition~\ref{thm:basic} consists just in the 
observation that, thanks to quantifier elimination in $T^\star$,~\eqref{eq:smc} is a quantifier-free formula 
and that 
a quantifier-free formula is satisfiable in a model of $T$ 
 iff so is it in a model of $T^*$:

\begin{proposition}\label{thm:basic} Suppose that the universal $\Sigma$-theory $T$ has decidable constraint satisfiability problem and admits a model completion $T^*$.
  For every transition system
  $\cS=\tup{\Sigma,T,\ux,\iota,\tau}$, the  backward search algorithm is effective and partially correct for solving safety
    problems for $\cS$.\footnote{\emph {Partial correctness} means that, when
     the algorithm terminates, it gives a correct answer. \emph{Effectiveness}
     means that all subprocedures in the algorithm can be effectively
     executed.}
\end{proposition}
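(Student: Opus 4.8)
The plan is to establish the two requirements of the statement—\emph{effectiveness} and \emph{partial correctness}—separately, the common tool for both being a single loop invariant together with the two defining properties of the model completion. Writing $\phi^{(0)}:=\upsilon$, $B^{(0)}:=\bot$, and denoting by $\phi^{(k)},B^{(k)}$ the values of the program variables at the start of the $k$-th iteration, we have $\phi^{(k+1)}=\mathsf{QE}(T^*,\mathit{Pre}(\tau,\phi^{(k)}))$ and $B^{(k+1)}=\phi^{(k)}\vee B^{(k)}$, so that $B^{(k)}=\phi^{(0)}\vee\cdots\vee\phi^{(k-1)}$. First I would record the syntactic invariant that every $\phi^{(k)}$ and every $B^{(k)}$ is quantifier-free: this holds initially, it is preserved by the disjunction in Line~4, and although $\mathit{Pre}(\tau,\phi^{(k)})$ introduces the existential quantifiers $\exists\ux'$, these are removed again by the call to $\mathsf{QE}(T^*,\cdot)$ in Line~6.

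For \emph{effectiveness} I would then argue that each subprocedure runs on quantifier-free input. The satisfiability tests in Lines~2 and~3 are $T$-satisfiability checks of quantifier-free formulae; since $T$ has decidable constraint satisfiability, an arbitrary quantifier-free formula is decided by putting it in disjunctive normal form and testing each disjunct, so both tests are effective. The preimage in Line~5 is the purely syntactic operation $\phi\mapsto\exists\ux'(\tau(\ux,\ux')\wedge\phi(\ux'))$, and the elimination in Line~6 is effective because $T^*$ admits quantifier elimination, for which (by the standing convention of Section~\ref{sec:prelim}) an effective procedure is given.

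For \emph{partial correctness} the core is the semantic invariant
\[
T^*\models \phi^{(k)}(\ux)\;\leftrightarrow\;\exists\ux^1\cdots\exists\ux^k\big(\tau(\ux,\ux^1)\wedge\cdots\wedge\tau(\ux^{k-1},\ux^k)\wedge\upsilon(\ux^k)\big),
\]
proved by induction on $k$: the base case is the definition of $\phi^{(0)}$, and the inductive step uses that $\mathit{Pre}(\tau,\cdot)$ turns the $k$-fold preimage into the $(k+1)$-fold one (after renaming $\ux'$) and that $\mathsf{QE}(T^*,\cdot)$ returns a quantifier-free formula $T^*$-equivalent to its argument. The second ingredient is the model-completion fact that a quantifier-free formula is satisfiable in a model of $T$ iff it is satisfiable in a model of $T^*$. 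Combining the two, $\iota\wedge\phi^{(k)}$ is $T$-satisfiable iff it is $T^*$-satisfiable iff, by the invariant, \eqref{eq:smc} is $T^*$-satisfiable iff \eqref{eq:smc} is $T$-satisfiable (the last step again using that \eqref{eq:smc} is quantifier-free). Hence the \textsf{unsafe} answer returned in Line~3 is correct, and no run of length $k$ exists precisely when $\iota\wedge\phi^{(k)}$ is $T$-unsatisfiable.

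The step I expect to require the most care is justifying the \textsf{safe} answer, i.e.\ that the fixpoint test in Line~2 certifies the absence of runs of \emph{every} length. Suppose the loop exits at stage $n$, so that $T\models\phi^{(n)}\to B^{(n)}$ while all the tests in Line~3 failed, giving that $\iota\wedge\phi^{(j)}$ is $T$-unsatisfiable for $j<n$. I would show $T\models\phi^{(m)}\to B^{(n)}$ for all $m\ge n$ by induction on $m-n$, using that the operator $P:=\mathsf{QE}(T^*,\mathit{Pre}(\tau,\cdot))$ is monotone with respect to $T$-entailment (because $\mathit{Pre}$ is monotone and $T,T^*$ prove the same quantifier-free implications) and distributes over disjunction modulo $T^*$; applying $P$ to the induction hypothesis yields $T\models\phi^{(m+1)}\to(\phi^{(1)}\vee\cdots\vee\phi^{(n)})$, and modulo $T$ this disjunction implies $B^{(n)}$ (its first $n-1$ disjuncts already occur in $B^{(n)}$, and $\phi^{(n)}$ implies $B^{(n)}$ by the exit condition). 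Since each disjunct of $B^{(n)}$ is $T$-incompatible with $\iota$, it follows that $\iota\wedge\phi^{(m)}$ is $T$-unsatisfiable for every $m$, so by the invariant and the quantifier-free transfer between $T$ and $T^*$ the formula \eqref{eq:smc} is unsatisfiable for all $k$, and the system is indeed safe.
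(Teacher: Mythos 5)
Your proof is correct and follows essentially the same route as the paper: the paper's (deliberately compressed, two-line) argument rests precisely on your two key facts, namely that quantifier elimination in $T^*$ keeps the iterated preimages quantifier-free, and that a quantifier-free formula is satisfiable in a model of $T$ iff it is satisfiable in a model of $T^*$. The explicit loop invariants, the DNF reduction for the satisfiability tests, and the monotonicity/fixpoint justification of the \textsf{safe} verdict are standard bookkeeping that the paper leaves implicit; your write-up simply spells them out.
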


Despite its simplicity, Proposition~\ref{thm:basic} is a crucial fact. Notice that it implies decidability of the safety problems in some interesting cases: this happens, for instance, when in $T$ 
there are only finitely many quantifier-free formulae in which $\ux$ occur, as in case $T$ has a purely relational signature or, more generally, $T$ is \emph{locally finite}\footnote{We say that 
$T$ is locally finite iff for every finite tuple of variables $\ux$ there are only finitely many non $T$-equivalent atoms 
$A(\ux)$ involving only the variables $\ux$.}. Since a theory is universal iff it is closed under substructures~\cite{CK} and since a universal 
locally finite theory has  a model completion iff it has the amalgamation property~\cite{wheeler,LIP}, 
it follows that Proposition~\ref{thm:basic} can be used to cover the decidability result stated in Theorem 5 of~\cite{boj} (once restricted to transition systems over a first-order definable class of $\Sigma$-structures).

\begin{subsection}{Database Schemata}
\label{sec:readonly}
In this subsection, we provide a new application for the above explained model-checking techniques~\cite{CGGMR18,CGGMR19}. The application relates to the verification of integrated models of business processes and data \cite{CaDM13}, referred to as artifact systems \cite{Vian09}, where the behavior of the process is influenced by data stored in a relational database (DB) with constraints. The data contained therein are read-only: they can be queried by the process and stored in a working memory, which in the context of this paper is constituted by a set of system variables. In this context, safety amounts to checking whether the system never reaches an undesired property, irrespectively of what is contained in the read-only DB.  

We define next the two key notions of (read-only) DB schema and instance, by relying on an algebraic, functional characterization. 	

\begin{definition}\label{def:db}
A \emph{DB schema} is a pair $\tup{\Sigma,T}$, where:
  \begin{inparaenum}[\it (i)]
    \item $\Sigma$ is a \emph{DB signature}, that is, a finite multi-sorted signature whose 
    function symbols are all unary;
    \item $T$ is a \emph{DB theory}, that is, a set of universal $\Sigma$-sentences. 
  \end{inparaenum}
\end{definition}
We now focus on extensional data conforming to a given DB schema.	
\begin{definition}	
\label{def:instance}
	A \emph{DB instance} of DB schema $\tup{\Sigma,T}$ is a $\Sigma$-structure $\cM$ such that
	$\cM$ is a model of $T$.\footnote{One may restrict to models interpreting sorts as \emph{finite} sets, as customary in database theory.
	Since the theories we are dealing with usually have finite model property for constraint satisfiability, assuming such restriction turns out 
	to be irrelevant, as far as safety problems are concerned (see~\cite{CGGMR18,CGGMR19} for an accurate discussion).
	}
\end{definition}

One might be surprised by the fact that signatures in our DB schemata contain unary function symbols, beside relational symbols. 
As shown in~\cite{CGGMR18,CGGMR19}, the algebraic, functional characterization of DB schema and instance can be actually reinterpreted in the classical, relational model so as to reconstruct the requirements posed in~\cite{verifas}. Definition~\ref{def:db} naturally corresponds to the definition of relational database schema equipped with single-attribute \emph{primary keys} and \emph{foreign keys}. 
To see this connection, we adopt the \emph{named perspective}, where each relation schema is defined by a signature containing a \emph{relation name} and a set of \emph{typed attribute names}. Let $\tup{\Sigma,T}$ be a DB schema. Each 
sort 
$S$ from $\Sigma$
corresponds to a dedicated relation $R_S$ with the following attributes:
    \begin{inparaenum}[\itshape (i)]
      \item one identifier attribute $id_S$ with type $S$;
      \item one dedicated attribute $a_f$ with type $S'$ for every function symbol 
      $f$ from $\Sigma$
      of the form $f: S \longrightarrow S'$.
    \end{inparaenum}

The fact that $R_S$ is constructed starting from functions in $\Sigma$ naturally induces corresponding functional dependencies within $R_S$, and inclusion dependencies from $R_S$ to other relation schemas. 
In particular, for each non-id attribute $a_f$ of $R_S$, we get a functional dependency from $id_S$ to $a_f$. Altogether, such dependencies witness that $\mathit{id}_S$ is the \emph{primary key} of $R_S$. In addition,
for each non-id attribute $a_f$ of $R_S$ whose corresponding function symbol $f$ has id sort $S'$ as image, we get an inclusion dependency from $a_f$ to the id attribute $id_{S'}$ of $R_{S'}$. This captures that $a_f$ is a \emph{foreign key} referencing $R_{S'}$.

Given a DB instance $\M$ of $\tup{\Sigma,T}$, its corresponding relational instance $\mathcal{R}[{\cM}]$ is the minimal set satisfying the following property:
for every id sort $S$ from $\Sigma$, let $f_1,\ldots,f_n$ be all functions in $\Sigma$
with domain $S$; then, for every identifier $\constant{o} \in S^\M$, $\mathcal{R}[{\cM}]$ contains a  \emph{labeled fact} of the form $R_S(\lentry{id_S}{\constant{o}^\M},\lentry{a_{f_1}}{f_1^\M(\constant{o})},\ldots,\lentry{a_{f_n}}{f_n^\M(\constant{o})})$.	
In addition, $\mathcal{R}[{\cM}]$ contains the tuples from $r^\cM$, for every relational symbol $r$ from $\Sigma$ (these relational symbols represent plain relations, i.e. those not possessing a key).
	
	We close our discussion by focusing on DB theories. Notice that EUF suffices to handle the sophisticated setting of database-driven systems from \cite{CGGMR19} (e.g., key dependencies). The role of a non-empty DB theory is to encode background axioms to express additional constraints. We illustrate a typical background axiom, required to handle the possible presence of \emph{undefined identifiers/values} in the different sorts. This, in turn, is essential to capture 
artifact systems
whose working memory is initially undefined, in the style of \cite{DeLV16,verifas}. To accommodate this, we add to every sort $S$ of $\Sigma$ a constant $\nullv_S$ (written by abuse of notation just $\nullv$ from now on), used to specify an undefined value. Then, for each function symbol $f$ of $\Sigma$, we can impose additional constraints involving $\nullv$, for example by adding the following axioms to the DB theory:
	\begin{equation}\label{eq:null}
	 \forall x~(x = \nullv \leftrightarrow f(x) = \nullv)
	\end{equation}
     This axiom states that the application of $f$ to the undefined value produces an undefined value, and it is the only situation for which $f$ is undefined.
  A slightly different approach may handle \emph{many} undefined values for each sort; the reader is referred to~\cite{CGGMR18,CGGMR19} for examples of concrete database instances formalized in our framework. We just point out that in most cases the kind of axioms that we need for our DB theories $T$ are just \emph{one-variable universal axioms} (like Axioms~\ref{eq:null}), so that they fit 
  the hypotheses of 
  Proposition~\ref{prop:mc} below.

	\end{subsection}



We are interested in applying the algorithm of Proposition~\ref{thm:basic} to what we call \emph{simple artifact systems}, i.e. transition systems 
$\cS ~=~\tup{\Sigma, T, \ux, \iota(\ux), \tau(\ux, \ux')}$, where $\tup{\Sigma,T}$ is a DB schema in the sense of Definition~\ref{def:db}. To this aim, it is sufficient to
identify a suitable class of DB theories having a model completion and whose constraint satisfiability problem is decidable.
A first result in this sense is  given below. 
We associate to a DB signature $\Sigma$ 
the edge-labeled graph 
$G(\Sigma)$ whose nodes are the sorts in ${\Sigma}$, and such that $G(\Sigma)$ contains a labeled edge $S \xrightarrow{f} S'$ if and only if $\Sigma$ contains a function symbol 
whose source sort is $S$ and whose target sort is $S'$.
We say that $\Sigma$ is \emph{acyclic} if $G(\Sigma)$ is so. 

\begin{proposition}\label{prop:mc} A DB theory
	 $T$ has decidable constraint satisfiability problem and admits a model completion in  case it is axiomatized by finitely many universal one-variable formulae and $\Sigma$ is acyclic. 
         %
	\end{proposition}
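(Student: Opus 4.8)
The plan is to treat the two assertions separately: decidability of the constraint satisfiability problem will follow from a bounded finite model property, whereas the existence of a model completion will be reduced, through the fact that a universal \emph{locally finite} theory has a model completion iff it enjoys the amalgamation property~\cite{wheeler,LIP}, to an explicit amalgamation construction that exploits acyclicity, unarity of the functions, and the one-variable shape of the axioms.

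First I would extract the combinatorial core from acyclicity. Since every function symbol is unary, each term over a finite tuple $\ux$ is an iterated application $f_1(\cdots f_k(x)\cdots)$ whose symbols trace a directed walk in $G(\Sigma)$; as $G(\Sigma)$ is acyclic, such a walk is a simple path, so $k$ is bounded by the number of sorts and only finitely many terms $t(\ux)$ exist. Hence there are only finitely many atoms over $\ux$, i.e.\ $T$ is locally finite. For decidability, given a constraint $\phi(\ux)$, I use that $T$ is universal, thus closed under substructures~\cite{CK}: if $\phi$ holds in some model $\cM$ of $T$ under an assignment to $\ux$, then the substructure of $\cM$ generated by that assignment is again a model of $T$, its cardinality is bounded by the fixed finite number of terms over $\ux$, and it still satisfies $\phi$ (a quantifier-free formula whose subterms all live in the substructure). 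So $\phi$ is $T$-satisfiable iff it holds in one of the finitely many structures supported on the term set, a condition checkable effectively.

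It remains to produce the model completion, and by the above it suffices to verify the amalgamation property of Definition~\ref{def:amalgamation}. Given embeddings $\mu_1:\cM_0\to\cM_1$ and $\mu_2:\cM_0\to\cM_2$ between models of $T$, I would assume w.l.o.g.\ that $\cM_0$ is a common substructure with $|\cM_1|\cap|\cM_2|=|\cM_0|$ and build $\cM$ on $|\cM_1|\cup|\cM_2|$. Each unary function is interpreted side-wise, $f^{\cM}(m):=f^{\cM_i}(m)$ for $m\in|\cM_i|$; this is well defined, since the two interpretations agree on $|\cM_0|$, and it keeps $f^{\cM}(m)$ on the same side as $m$. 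Every relation $P$ is interpreted as $P^{\cM_1}$ on tuples lying entirely in $|\cM_1|$, as $P^{\cM_2}$ on tuples lying entirely in $|\cM_2|$ (the two prescriptions agree on tuples from $|\cM_0|$), and as false on the remaining mixed tuples; constants belong to the shared part $|\cM_0|$ and are thus interpreted consistently. By construction the inclusions $\nu_i:\cM_i\hookrightarrow\cM$ are embeddings with $\nu_1\circ\mu_1=\nu_2\circ\mu_2$.

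The decisive step is checking $\cM\models T$, and here the one-variable hypothesis is essential. Fix an axiom $\forall x\,\alpha(x)$ with $\alpha$ quantifier-free in the single variable $x$, and any $m\in|\cM|$; then $m\in|\cM_i|$ for some $i$, and since all functions are unary every term $t(m)$ occurring in $\alpha$ is again an element of $|\cM_i|$, while constants lie in $|\cM_0|\subseteq|\cM_i|$. Hence each equality and, crucially, each relational atom $P(t_1(m),\ldots,t_n(m))$ in $\alpha(m)$ involves only elements of the single side $|\cM_i|$: no mixed tuple is ever inspected, so every such atom receives in $\cM$ exactly the truth value it has in $\cM_i$, whence $\cM\models\alpha(m)$ because $\cM_i\models\forall x\,\alpha(x)$. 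As every element belongs to some side and there are finitely many axioms, $\cM\models T$, establishing amalgamation; combined with local finiteness and universality, the characterization of~\cite{wheeler,LIP} yields a model completion $T^*$. The only genuinely delicate point is the treatment of the $n$-ary relations: one must ensure the axioms never constrain a mixed tuple, which is precisely what the interaction of unary functions with single-variable axioms guarantees.
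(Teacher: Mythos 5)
Your proof is correct, and its core coincides with the paper's: the amalgam built on $|\cM_1|\cup|\cM_2|$ with side-wise unary functions, relations holding only on tuples lying entirely on one side, and the one-variable axioms verified element-by-element on a single side is exactly the amalgamation argument given in the paper (which likewise stresses that unarity of functions plus single-variable axioms is what keeps mixed tuples out of play). Where you genuinely diverge is in the two reductions surrounding that core. For decidability, you use a bounded model property: the substructure generated by the assignment is finite (acyclicity plus unarity bound the terms over $\ux$) and is still a model of $T$ by universality, so satisfiability reduces to a finite search; the paper instead reduces via Herbrand's theorem to the empty theory (the ground instances of the axioms over the free constants are finitely many by acyclicity) and then appeals to congruence closure, which is algorithmically sharper but proves the same decidability claim. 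For the model completion, you derive local finiteness and then invoke as a black box the characterization that a universal locally finite theory has a model completion iff it is amalgamable~\cite{wheeler,LIP} --- a statement the paper itself records right after Proposition~\ref{thm:basic}. The paper's proof does not use that theorem as a black box: it essentially reproves it in this setting, defining the cover $\psi(\uy)$ of a primitive formula $\exists\ue\,\phi(\ue,\uy)$ explicitly as the conjunction of the finitely many (up to $T$-equivalence, by acyclicity) quantifier-free consequences $\chi(\uy)$ of $\phi$, verifying conditions (i) and (ii) of Lemma~\ref{lem:cover} via diagrams, compactness, and the amalgamation property, and then concluding through Theorem~\ref{prop:qe} that $T^*$ exists and is axiomatized by the sentences~\eqref{eq:qe}. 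Your route is shorter and more modular; the paper's route buys an explicit axiomatization of $T^*$ and an (inefficient but concrete) cover-computation procedure, which is what motivates the superposition-based algorithm of Section~\ref{sec:SC}.
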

%

\begin{proof} 

First, notice that, in case $\Sigma$ is acyclic, $T$ has the finite model property. In fact, if $T:=\emptyset$, then congruence closure ensures that the finite model property holds and decides  constraint satisfiability in time $O(n\log n)$. Otherwise, we reduce the argument to the Herbrand Theorem.  Indeed, suppose to have a finite set $\Phi$ of universal formulae
and let $\phi(\ux)$ be the constraint we want to test for satisfiability. Replace the variables $\ux$ with free constants $\ua$. Herbrand Theorem states that $\Phi\cup \{\phi(\ua)\}$ has a model iff the set of ground $\Sigma^{\ua}$-instances of $\Phi\cup \{\phi(\ua)\}$ has a model. These ground instances are finitely many by acyclicity, so we can reduce to the case where $T$ is empty. Hence,  the constraint satisfiability problem for $T$ is decidable.

To show the existence of a model completion,
we freely take inspiration from an analogous result in~\cite{wheeler,LIP}.  We preliminarily show that $T$ is amalgamable. Then, for a suitable choice of $\psi$ suggested by the acyclicity assumption, the amalgamation property will be used to prove the validy of the condition {\rm (ii)} of Lemma~\ref{lem:cover}: this fact (together with condition {\rm (i)} of Lemma~\ref{lem:cover} ) yields that $T$ has a model completion which is axiomatized by the infinitely many sentences (\ref{eq:qe}).

Let $\cM_1$ and $\cM_2$ two models of $T$ with a submodel $\mathcal{M}_0$ of $T$ in common (we suppose for simplicity that $|\cM_1|\cap|\cM_2|=|\cM_0|)$. We define a $T$-amalgam $\cM$ of $\cM_1, \cM_2$ over $ \cM_0$ as follows (we use in an essential way the fact that $\Sigma$ contains only \textit{unary} function symbols and $n$-ary relation symbols). 
Let the support of $\cM$ be the set-theoretic union of the supports of $\cM_1$ and $\cM_2$, i.e. $|\cM|:= |\cM_1|\cup |\cM_2|$. $\cM$ has a natural $\Sigma$-structure inherited by the $\Sigma$-structures $\cM_1$ and $\cM_2$.
For every function symbol $f$ in $\Sigma$, we define, for each $m_i\in |\cM_i|$ ($i=1,2$), $f^{\cM}(m_i):=f^{\cM_1}(m_i)$, i.e. the interpretation of $f$ in $\cM$ is the restriction of the interpretation of $f$ in $\cM_i$ for every element $m_i\in |\cM_i|$. This is well-defined since, for every $a\in |\cM_1|\cap|\cM_2|=|\cM_0|$, we have that $f^{\cM}(a):=f^{\cM_1}(a)=f^{\cM_0}(a)=f^{\cM_2}(a)$.  For every $n$-ry relation symbol $R$, we let $R^{\cM}$ be the union of $R^{\cM_1}\cup R^{\cM_2}$; notice that if the tuple $\ua$ belongs to $ R^{\cM}$, then we must have that  $\ua\in |\cM_i|^n$ and that $\ua\in R^{\cM_i}$ for either $i=1$ or $i=2$ (or for both, but this happens just in case $\ua\in |\cM_0|^n=|\cM_1|^n\cap |\cM_2|^n$, because $\cM_0$ is a substructure of $\cM_1$ and $\cM_2$).
It is clear that $\cM_1$ and $\cM_2$ are substructures of $\cM$, and their inclusions agree on $\cM_0$.

We show that the $\Sigma$-structure $\cM$, as defined above, is a model of $T$. By hypothesis, $T$ is axiomatized by universal one-variable formulae: so, we can consider $T$ as a theory formed by axioms $\phi$ which are universal closures of clauses with just one variable, i.e. $\phi:=\forall x (A_1(x)\wedge...\wedge A_n(x)\rightarrow B_1(x)\vee...\vee B_m(x))$, where $A_j$ and $B_k$ ($j=1,...,n$ and $k=1,...,m$) are atoms.

We show that $\cM$ satisfies all such formulae $\phi$. In order to do that, suppose that, for every $a\in |\cM|$, $\cM\models A_j(a)$ for all $j=1,...,n$. If $a\in |\cM_i|$, then $\cM\models A_j(a)$ implies $\cM_i\models A_j(a)$, since $A_j(a)$ is a ground formula. Since $\cM_i$ is model of $T$ and so $\cM_i\models \phi$, we get that $\cM_i\models B_k(a)$ for some $k=1,...,m$, which means that $\cM\models B_k(a)$, since $B_k(a)$ is a ground formula. Thus, $\cM\models\phi$ for every axiom $\phi$ of $T$, i.e. $\cM\models T$ and, hence, $\cM$ is a $T$-amalgam of $\cM_1, \cM_2$ over $ \cM_0$, as wanted.
	
Now, given a primitive formula $\exists \ue \,\phi(\ue,\uy)$, we find a suitable $\psi(\uy)$ such that conditions {\rm (i)} and {\rm (ii)} of Lemma~\ref{lem:cover} hold. We define $\psi(\uy)$ as the conjunction of the set of all quantifier-free $\chi(\uy)$-formulae such that $\phi(\ue,\uy)\rightarrow \chi(\uy)$ is a logical consequences of $T$ (they are finitely many - up to $T$-equivalence - because $\Sigma$ is acyclic). By definition, clearly we have that {\rm (i)} of Lemma~\ref{lem:cover} holds.

 We show that also condition {\rm (ii)} of Lemma~\ref{lem:cover} is satisfied. Let $\cM$ be a model of $T$ such that $\cM\models \psi(\ua)$ for some tuple of elements $\ua$ from the support of $\cM$. Then, consider the $\Sigma$-substructure $\cM[\ua]$ of $\cM$ generated by the elements $\ua$: this substructure is finite (since $\Sigma$ is acyclic), it is a model of $T$ and we trivially have that $\cM[\ua]\models \psi(\ua)$, since $\psi(\ua)$ is a ground formula. In order to prove that there exists an extension $\cN^{\prime}$ of $\cM[\ua]$ such that $\cN\models \exists \ue\,\phi(\ue,\ua)$, it is sufficient to prove (by the Robinson Diagram Lemma) that the $\Sigma^{|\cM[\ua]|\cup\{\ue\}}$-theory $\Delta(\cM[\ua])\cup\{\phi(\ue,\ua)\}$ is $T$-consistent. For reduction to absurdity, suppose that the last theory is $T$-inconsistent. Then, there are finitely many literals $l_1(\ua),..., l_m(\ua)$ from $\Delta(\cM[\ua])$ (remember that $\Delta(\cM[\ua])$ is a finite set of literals since $\cM[\ua]$ is a finite structure) such that $\phi(e,\ua)\models_T \neg(l_1(\ua)\wedge...\wedge l_m(\ua))$. Therefore, defining $A(\ua):=l_1(\ua)\wedge...\wedge l_m(\ua)$, we get that $\phi(e,\ua)\models_T \neg A(\ua)$, which implies that $\neg A(\ua)$ is one of the $\chi(\uy)$-formulae appearing in $\psi(\ua)$. Since $\cM[\ua]\models \psi(\ua)$, we also have that $\cM[\ua]\models \neg A(\ua)$, which is a contraddiction: in fact, by definition of diagram, $\cM[\ua]\models A(\ua)$ must hold. Hence, there exists an extension $\cN^{\prime}$ of $\cM[\ua]$ such that $\cN^{\prime}\models\exists \,\ue\phi(\ue,\ua)$.
Now, by amalgamation property, there exists a $T$-amalgam $\cN$ of $\cM$ and $\cN^{\prime}$ over $\cM[\ua]$: clearly, $\cN$ is an extension of $\cM$ and, since $\cN^{\prime}\hookrightarrow \cN$ and $\cN^{\prime}\models\exists \ue\,\phi(\ue,\ua)$, also $\cN\models\exists \ue\,\phi(\ue,\ua)$ holds, as required.
\end{proof}

Since acyclicity of $\Sigma$ yields local finiteness, we immediately get as a Corollary the decidability of safety problems for transitions systems based on DB schema satisfying the hypotheses of the above theorem.  
	

\section{Covers via Constrained Superposition}\label{sec:SC}

Of course, a model completion  may not exist at all; Proposition~\ref{prop:mc} shows that it exists in case $T$ is a DB theory axiomatized by universal one-variable formulae and $\Sigma$ is acyclic. The second hypothesis is unnecessarily restrictive and the 
algorithm for quantifier elimination suggested by the proof of Proposition~\ref{prop:mc} is highly impractical: 
for this reason we are trying a different approach.
In this section, we drop the acyclicity hypothesis and examine the case where the theory $T$ is empty and the signature $\Sigma$ may contain function symbols of any arity. Covers in this context were shown to exist already in~\cite{GM}, using an algorithm that, very roughly speaking, determines all the conditional equations that can be derived concerning the nodes of the congruence closure graph. 
An algorithm for the generation of interpolants, still relying on congruence closure~\cite{kapurCC} and similar to the one presented in~\cite{GM}, is supplied in~\cite{kapur}.

We follow a different plan and we want to produce covers (and show that they exist)  using \emph{saturation-based theorem proving}. The natural idea to proceed in this sense is to take the matrix $\phi(\ue,\uy)$ of the primitive formula $\exists \ue\, \phi(\ue, \uy)$ we want to compute the cover of: this is a conjunction of literals, so we consider each variable as a free constant, we saturate the corresponding set of ground literals and finally we output the literals involving only the $\uy$. For saturation, one can use any version of the superposition calculus~\cite{NR}. This procedure  however for our problem  is not sufficient. As a trivial counterexample consider the primitive formula $\exists e \,(R(e, y_1)\wedge \neg R(e, y_2))$: the set of literals $\{ R(e, y_1), \neg R(e, y_2)\}$ is saturated (recall that we view $e,y_1,y_2$ as constants), however the formula has a non-trivial cover  $y_1\neq y_2$ which is \emph{not} produced by saturation. If we move to signatures with  function symbols, the situation is even worse: the set of literals $\{ f(e,y_1)=y'_1, f(e,y_2)= y'_2\}$ is saturated but the formula $\exists e\, (f(e,y_1)=y'_1\wedge f(e,y_2)= y'_2)$ has the \emph{conditional equality} $y_1=y_2\to y'_1=y'_2$ as  cover. \emph{Disjunctions of disequations} might also arise: the cover 
of $\exists e\, h(e,y_1,y_2)\neq h(e, y'_1,y'_2)$ (as well as the cover of $\exists e\, f(f(e,y_1),y_2)\neq f(f(e,y_1'),y_2')$, see Example~\ref{ex:disj} below)
is $y_1\neq y'_1\vee y_2\neq y'_2$.~\footnote{ This example points out a problem that needs to be fixed in the algorithm presented in~\cite{GM}: 
 that algorithm in fact outputs only equalities, conditional equalities and single disequalities, so it cannot correctly handle this example.
}

Notice that our problem is different from the problem of producing ordinary quantifier-free interpolants via saturation 
based theorem proving~\cite{voronkov}: for ordinary Craig interpolants, we have 
as input \emph{two} quantifier-free formulae $\phi(\ue, \uy), \phi'(\uy, \uz)$ such that $\phi(\ue,\uy)\to \phi'(\uy, \uz)$ is valid; here we have a \emph{single} formula
$\phi(\ue, \uy)$ in input and we are asked to find an interpolant which is good \emph{for all possible} $\phi'(\uy, \uz)$ such that $\phi(\ue,\uy)\to \phi'(\uy, \uz)$ is valid. Ordinary interpolants can be extracted from a refutation of $\phi(\ue,\uy)\wedge \neg \phi'(\uy, \uz)$, here we are not given any refutation at all (and we are not even supposed to find one).

 What we are going to show is that, nevertheless, saturation via superposition can be used to produce covers, if suitably adjusted.
In this section we consider signatures with $n$-ary function symbols (for all $n\geq 1$). For simplicity, 
we omit $n$-ary relation symbols
(you can easily handle them 
by rewriting $R(t_1, \dots, t_n)$ 
as $R(t_1, \dots, t_n)=true$, as customary in the paramodulation literature~\cite{NR}).

We are going to compute the cover of a primitive formula $\exists \ue\,\phi(\ue, \uy)$ to be fixed for the remainder of this section.
We call variables $\ue$ \emph{existential} and variables $\uy$ \emph{parameters}. By applying abstraction steps, we can assume that $\phi$ is \emph{primitive flat}. i.e. that it is  a conjunction of 
\emph{$\ue$-flat literals}, defined below. [By an abstraction step we mean replacing $\exists \ue \,\phi$ with $\exists \ue\, \exists e' (e'= u\wedge \phi')$, where 
$e'$ is a fresh variable and $\phi'$ is obtained from $\phi$ by replacing some occurrences of a term $u(\ue, \uy)$ by $e'$].

A term or a formula are said to be $\ue$-free iff the existential variables do not occur in it. An $\ue$-flat term is an $\ue$-free term $t(\uy)$ or a variable from $\ue$ or again it is of the kind $f(u_1, \dots, u_n)$, where $f$ is a function symbol and $u_1, \dots, u_n$ are 
$\ue$-free terms or variables from $\ue$. An $\ue$-flat literal is a literal of the form
\[
t=a, 
\quad a\neq b
\]
where $t$ is an $\ue$-flat term and $a,b$ are either $\ue$-free terms or variables from $\ue$.

We assume the reader is familiar with standard conventions used in rewriting and paramodulation literature: in particular $s_{\vert p}$ denotes the subterm of $s$ in position $p$ and $s[u]_p$ denotes the term obtained from $s$ by replacing $s_{\vert p}$ with $u$. 
We use $\equiv$ to indicate coincidence of syntactic expressions (as strings) to avoid confusion with equality symbol; when we write   equalities like $s=t$ below, we may mean both $s=t$ or $t=s$ (an equality is seen as a multiset of two terms).
For information on reduction ordering, see for instance~\cite{BaNi98}.

We first replace variables $\ue=e_1,\dots,e_{n}$ and $\uy= y_1, \dots, y_{m}$ by free
constants - we keep the names $e_1,\dots,e_{n}, y_1, \dots, y_{m}$
for these constants. 	 
Choose a reduction ordering $>$ total for ground terms such that $\ue$-flat literals $t=a$ are always oriented from left to right in the following two cases: (i) $t$ is not $\ue$-free and $a$ is $\ue$-free; (ii) $t$ is not $\ue$-free, it is not equal to any of the $\ue$ and $a$ is a variable from $\ue$. To obtain such properties, one may for instance choose a suitable Knuth-Bendix ordering taking weights in some transfinite ordinal, see~\cite{LudW07}.

Given two $\ue$-flat terms $t, u$, we indicate with $E(t,u)$ the following procedure:
\begin{compactitem}[$\bullet$]
 \item $E(t,u)$ fails if $t$ is $\ue$-free and 
 $u$ is not $\ue$-free
 (or vice versa);
 \item $E(t,u)$ fails if $t\equiv e_i$ and 
 (either $t\equiv f(t_1, \dots, t_k)$ or
 $u\equiv e_j$ for $i\neq j$);
 \item $E(t,u)=\emptyset$ if $t\equiv u$;
 \item $E(t,u)=\{t=u\}$ if $t$ and $u$ are different but both $\ue$-free;
 \item $E(t,u)$ fails if none of $t,u$ is $\ue$-free, $t\equiv f(t_1,\dots, t_k)$ and $u\equiv g(u_1,\dots, u_l)$ for $f\not\equiv g$;
 \item $E(t,u)=E(t_1,u_1)\cup \cdots \cup E(t_k,u_k)$ if none of $t,u$ is $\ue$-free, $t\equiv f(t_1,\dots, t_k)$, $u\equiv f(u_1,\dots, u_k)$
 and none of the $E(t_i, u_i)$ fails.
\end{compactitem}
Notice that, whenever $E(t,u)$ succeeds, the formula $\bigwedge E(t,u)\to t=u$ is universally valid.
The definition of $E(t,u)$ is motivated by the next lemma.
 \begin{lemma}\label{lem:E}
 Let $R$ be a convergent (i.e. terminating and confluent) ground rewriting system, whose rules consist of $\ue$-free terms. Suppose that $t$ and $u$ are $\ue$-flat terms with the same $R$-normal form. Then $E(t,u)$ does not fail and all pairs from $E(t,u)$ have the same $R$-normal form as well. 
\end{lemma}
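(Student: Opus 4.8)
The plan is to first pin down the exact shape of the $R$-normal form of an $\ue$-flat term, and then run an easy structural induction whose cases match the clauses defining $E$. For the preliminary observation I would argue that, because every rule of $R$ is $\ue$-free and ground, no rewrite step can ever act on an occurrence of an existential variable, nor at the root of a term headed by a function symbol having an existential variable among its immediate arguments: such a redex would have to coincide with a ($\ue$-free) left-hand side, which is impossible once it contains some $e_i$. Since in an $\ue$-flat term the existential variables occur only at the top (when the term is some $e_i$) or at depth one (as immediate arguments of the head symbol), this yields a clean description of normal forms: $\mathrm{nf}(e_i)\equiv e_i$; if $t$ is $\ue$-free then $\mathrm{nf}(t)$ is $\ue$-free; and if $t\equiv f(t_1,\dots,t_k)$ with at least one $t_j$ existential, then $\mathrm{nf}(t)\equiv f(\mathrm{nf}(t_1),\dots,\mathrm{nf}(t_k))$, the head $f$ being untouched. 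The point is that $f(\mathrm{nf}(t_1),\dots,\mathrm{nf}(t_k))$ is reachable from $t$ by rewriting only inside the $\ue$-free arguments and is irreducible (its arguments are normal and its root cannot be a redex), so by confluence it is the normal form. In particular the three possible shapes---$\ue$-free, an existential variable, or $f$-headed with an existential argument---are mutually exclusive, being distinguished by $\ue$-freeness and by the top symbol.

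With this in hand I would prove the lemma by induction on the combined size of $t$ and $u$, assuming $\mathrm{nf}(t)\equiv\mathrm{nf}(u)$ and distinguishing cases exactly as in the definition of $E$. The three failure branches are all excluded by the shape analysis: if one of $t,u$ were $\ue$-free and the other not, their normal forms would differ in $\ue$-freeness; if $t\equiv e_i$ were paired with a distinct existential variable or with an $f$-headed term, the normal forms would differ as $e_i$ versus another variable or an $f$-headed term; and if $t\equiv f(\dots)$ and $u\equiv g(\dots)$ with $f\not\equiv g$, the normal forms would have different top symbols. In the two base cases ($t\equiv u$, giving $\emptyset$; and $t,u$ distinct but both $\ue$-free, giving $\{t=u\}$) the claim is immediate, the single pair $t=u$ inheriting $\mathrm{nf}(t)\equiv\mathrm{nf}(u)$ from the hypothesis. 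In the recursive case $t\equiv f(t_1,\dots,t_k)$, $u\equiv f(u_1,\dots,u_k)$ (neither $\ue$-free), the shape description forces $\mathrm{nf}(t_j)\equiv\mathrm{nf}(u_j)$ for each $j$; the induction hypothesis then guarantees that each $E(t_j,u_j)$ does not fail and yields only pairs with equal normal forms, whence $E(t,u)=\bigcup_j E(t_j,u_j)$ enjoys the same two properties.

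The routine parts are the two base cases and the bookkeeping of the recursion; the only real content---and hence the main obstacle---is the preliminary shape lemma, namely verifying rigorously that $\ue$-free ground rules cannot interfere with the existential variables of an $\ue$-flat term, so that normalization is confined to the $\ue$-free subterms and the head structure is preserved. Once that is established, the mutual exclusivity of the three normal-form shapes does all the work, both of ruling out the failure branches and of transmitting the equality of normal forms down to the recursive calls.
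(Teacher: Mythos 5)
Your proof is correct and rests on exactly the same key observation as the paper's (one-sentence) proof: since the rules of $R$ are $\ue$-free and ground, no rewriting can occur at the root of a non-$\ue$-free term, so normalization of an $\ue$-flat term is confined to its $\ue$-free subterms. The paper leaves the resulting shape analysis and the structural induction over the clauses of $E$ implicit; you have simply spelled them out, correctly.
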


\begin{proof}
 This is due to the fact that if $t$ is not $\ue$-free, no $R$-rewriting is possible at root position because rules from $R$ are $\ue$-free.
\end{proof}

In the following, we handle \emph{constrained} ground flat literals of the form $L\,\Vert\, C$ where $L$ is a ground flat literal and $C$ is a conjunction of ground equalities among $\ue$-free terms. The logical meaning of $L\,\Vert\, C$ is the Horn clause $\bigwedge C\to L$.

In the literature, various calculi with constrained clauses were considered, starting e.g. from the  non-ground constrained versions of the Superposition Calculus of~\cite{basic,basic1}. The calculus we propose here is inspired by such versions and it has close similarities with a subcase of hierarchic superposition calculus~\cite{Uwe94}, or rather to its "weak abstraction" variant from~\cite{Uwe13} (we thank an anonymous referee for pointing out this connection).

The rules of our \emph{Constrained Superposition Calculus} follow;
each rule applies provided the $E$ subprocedure called by it does not fail. The symbol $\bot$ indicates the empty clause.  
Further explanations and restrictions to the calculus are given in the Remarks below.

\noindent\begin{tabularx}{\textwidth}{cCl}
\begin{tabular}{@{}c@{}}
\bfseries Superposition Right\\
\bfseries (Constrained)
\end{tabular}
&
$\begin{array}{c}
l=r~\Vert~ C\quad\quad s=t ~\Vert~ D
\\
\hline  s[r]_p=t ~\Vert~ C\cup D \cup E(s_{\vert p}, l)
\\
\end{array}$
&
if $l>r$ and 
$s > t$
\\[12pt]
\begin{tabular}{@{}c@{}}
\bfseries Superposition Left\\
\bfseries (Constrained)
\end{tabular}
&
$\begin{array}{c}
l=r~\Vert~ C\quad\quad s\neq t ~\Vert~ D
\\
\hline  s[r]_p\neq t ~\Vert~ C\cup D \cup E(s_{\vert p}, l)
\\
\end{array}$
&
if $l>r$ and 
$s > t$
\\[12pt]
\begin{tabular}{@{}c@{}}
\bfseries Reflexion\\
\bfseries (Constrained)
\end{tabular}
&
$\begin{array}{c}
t\neq u~\Vert~ C
\\
\hline  \bot ~\Vert~ C\cup E(t, u)
\\
\end{array}$
&
\\[12pt]
\begin{tabular}{@{}c@{}}
\bfseries Demodulation\\
\bfseries (Constrained)
\end{tabular}
&
$\begin{array}{c}
L~\Vert~ C, \quad \quad l=r~\Vert D
\\
\hline  L[r]_p ~\Vert~ C
\\
\end{array}$
&
\begin{tabular}{@{}rl@{}}
\text{if}&
$l>r$, $L_{\vert p}\equiv l$\\
&and $C\supseteq D$
\end{tabular}
\\
\end{tabularx}

\begin{remark}
 The first three rules are inference rules: 
 they are non-deterministic\-ally selected for application, until no rule applies anymore.
 The selection strategy for the rule to be applied is not relevant for the correctness and completeness of the algorithm (some variant of a `given clause algorithm' can be applied). An inference rule \emph{is not applied in case one  premise  is $\ue$-free} (we have no reason to apply inferences to $\ue$-free premises, since we are not looking for a refutation).
\end{remark}

\begin{remark}
 The Demodulation rule is a simplification rule: 
 its application not only adds the conclusion to the current set of constrained literals, but it also removes the first premise. It is easy to see (e.g., representing literals as multisets of terms and extending the total reduction ordering to multisets), that one cannot have an infinite sequence of consecutive applications of Demodulation rules.
 \end{remark}

\begin{remark}
 The calculus takes
$\{L\Vert \emptyset ~\mid~ L$ is a flat literal from the matrix of $\phi\}$ as the initial set of constrained literals. It terminates when a \emph{saturated} set of constrained literals is reached. We say that $S$ is saturated iff every constrained literal that can be produced by an inference rule, after being 
exhaustively simplified via Demodulation, is already in $S$ (there are more sophisticated notions of `saturation up to redundancy' in the literature, but we do not need them). When it reaches a saturated set $S$, the algorithm outputs the conjunction of the clauses $\bigwedge C\to L$, varying $L\,\Vert \, C$ among the $\ue$-free constrained literals from $S$.
\end{remark}

 \noindent
We need some rule application policy to ensure termination: without any such policy,  
 a set like $\{e=y\,\Vert\, \emptyset, f(e)=e\Vert \,\emptyset\}$ may produce by Right Superposition the infinitely many literals (all oriented from right to left) $f(y)=e\,\Vert\, \emptyset$, $f(f(y))=e\,\Vert\, \emptyset$, $f(f(f(y)))=e\,\Vert\, \emptyset$, etc. The next Remark explains the policy we follow.

 \begin{remark}\label{rem:simpl}
 First, we apply Demodulation \emph{only in case the second premise is of the 
 kind $e_j=t(\uy)\, \Vert D$, where $t$ is $\ue$-free}. 
 Demodulation rule is applied with higher priority with respect to the inference rules. Inside all possible applications of Demodulation rule, we give priority to the 
 applications where \emph{both premises  have the form} $e_j=t(\uy)\, \Vert D$ (for the same $e_j$ but with possibly different $D$'s - the $D$ from the second premise being included in the $D$ of the first).
 In case we have two constrained literals of the kind $e_j=t_1(\uy)\, \Vert D$,  $e_j=t_2(\uy)\, \Vert D$ inside our current set of constrained literals (notice that the $e_j$'s and the $D$'s here are the same), among the two possible applications of  the Demodulation rule,
 we apply the rule that keeps  the smallest $t_i$.  Notice that in this way two different constrained literals cannot simplify each other.
\end{remark}

We say that a constrained literal $L\, \Vert C$ belonging to a set of constrained literals $S$ is \emph{simplifiable in $S$} iff it is possible to apply (according to the above policy) a Demodulation rule removing it.
A first effect of our policy is: 

\begin{lemma}\label{lem:simplif}
 If a constrained literal 
 $L\,\Vert \, C$
 is simplifiable in $S$, then after applying to $S$ any sequence of rules, it remains simplifiable until it gets removed. 
 After being removed, if it is regenerated, it is still simplifiable and so it is eventually removed again.
\end{lemma}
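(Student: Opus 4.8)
The plan is to track, along any run of the calculus, a \emph{witness} for the simplifiability of $L\,\Vert\,C$. By the policy of Remark~\ref{rem:simpl}, the only constrained literals that can serve as the second premise of a Demodulation step are those of the special form $e_j=t(\uy)\,\Vert\,D$ with $t$ being $\ue$-free; such a rule is oriented $e_j\to t(\uy)$, so it rewrites exactly the occurrences of $e_j$. Hence $L\,\Vert\,C$ is simplifiable precisely because $S$ contains some \emph{witness} $e_j=t(\uy)\,\Vert\,D$ with $D\subseteq C$ and with $e_j$ occurring in $L$, subject to the extra proviso (from the last clause of Remark~\ref{rem:simpl}) that, when $L\,\Vert\,C$ itself has the special form $e_j=t'(\uy)\,\Vert\,C$, the witness must be \emph{strictly better}, i.e. either $D\subsetneq C$, or $D=C$ and $t<t'$. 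I would first isolate this notion of (qualifying) witness and reduce both assertions of the Lemma to a single invariant: \emph{as long as a qualifying witness for $L\,\Vert\,C$ was ever present, one is present at every later stage of the run}.

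The central step is the persistence of witnesses. Among the four rules only Demodulation removes a literal, and it removes only its \emph{first} premise, so a witness $e_j=t(\uy)\,\Vert\,D$ can disappear only by being demodulated as a first premise. Here the $\ue$-freeness of $t$ is decisive: the only $\ue$-variable occurring in the literal $e_j=t(\uy)$ is $e_j$ itself (its left-hand side), so the demodulating rule must again have the form $e_j=s(\uy)\,\Vert\,D'$ with $D'\subseteq D$, its effect being to replace our witness by the $\ue$-free literal $s=t\,\Vert\,D$ while leaving $e_j=s(\uy)\,\Vert\,D'$ in $S$. Thus a witness is never destroyed outright: it is always superseded by another special-form literal $e_j=s(\uy)\,\Vert\,D'$ with $D'\subseteq D\subseteq C$. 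I would then check, by the case distinction on $D'\subsetneq D$ versus $D'=D$, that this successor is again \emph{qualifying}; the case $D'=D$ is exactly where the ``keep the smallest $t_i$'' clause of Remark~\ref{rem:simpl} is used, forcing $s<t$ and hence $s<t'$ in the delicate special-form situation. Since the other three rules only add literals, the invariant is preserved by every rule application.

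With the invariant in hand, both parts follow. For the first part, while $L\,\Vert\,C$ stays in $S$ its literal $L$ is unchanged, so the distinguished $e_j$ still occurs in it; by the invariant a qualifying witness is present, and a qualifying witness is by definition distinct from $L\,\Vert\,C$ and enables a policy-conforming Demodulation removing $L\,\Vert\,C$, so $L\,\Vert\,C$ remains simplifiable until removed. For the second part, the invariant does not depend on $L\,\Vert\,C$ being present (the notion of ``qualifying'' only refers to the fixed data $e_j$, $C$ and, in the special case, $t'$), so it continues to hold after removal; therefore, whenever $L\,\Vert\,C$ is regenerated a qualifying witness is again available and $L\,\Vert\,C$ is again simplifiable. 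That it is \emph{eventually} removed again follows because Demodulation has priority over the inference rules and no infinite chain of consecutive Demodulations is possible: a maximal run of Demodulations must end with no applicable simplification, which is impossible while the simplifiable literal $L\,\Vert\,C$ is still present.

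I expect the main obstacle to be the bookkeeping around the special-form case $L\,\Vert\,C\equiv(e_j=t'(\uy)\,\Vert\,C)$, where not every $e_j$-rule is a legitimate simplifier and one must rely on the ``keep the smallest'' tie-breaking to guarantee that witness replacement never drops out of the qualifying class (and never accidentally identifies the witness with $L\,\Vert\,C$ itself). Getting the orientation facts and the inclusion $D'\subseteq D$ exactly right, so that the replacement literal is genuinely a special-form literal with a smaller constraint or a smaller right-hand side, is the place that must be argued with care; everything else is a straightforward induction on the length of the rule sequence.
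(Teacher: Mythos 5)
Your proposal is correct and takes essentially the same approach as the paper's proof: both track the simplifying witness $e_j=t(\uy)\,\Vert\, D$ and observe that the only way it can disappear is as the first premise of a Demodulation whose second premise $e_j=s(\uy)\,\Vert\, D'$ (with $D'\subseteq D\subseteq C$) immediately serves as a new witness, so simplifiability persists through any sequence of rule applications and also across removal and regeneration of $L\,\Vert\, C$. Your explicit ``qualifying witness'' bookkeeping for the case where $L\,\Vert\, C$ is itself of the form $e_j=t'(\uy)\,\Vert\, C$ (using the ``keep the smallest $t_i$'' tie-break of Remark~\ref{rem:simpl}) spells out a detail the paper's terser argument leaves implicit, but the underlying invariant and case analysis are the same.
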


\begin{proof}
Suppose that $L\,\Vert \, C$ can be simplified by $e=t\,\Vert\, D$ and suppose that a rule is applied to the current set of constrained literals. Since there are simplifiable constrained literals, that rule cannot be an inference rule by the priority stated in Remark~\ref{rem:simpl}. For simplification rules, keep in mind again Remark~\ref{rem:simpl}. If $L\,\Vert \, C$ is simplified, it is removed; if none of $L\,\Vert \, C$ and $e=t\,\Vert\, D$ get simplified, the situation does not change; if $e=t\,\Vert\, D$ gets simplified, this can be done by some $e=t'\Vert\,D'$, but then $L\,\Vert \, C$ is still simplifiable - although in a different way - using $e=t'\Vert\,D'$ (we have that $D'$ is included in $D$, which is in turn included in $C$). Similar observations apply if $L\,\Vert \, C$ is removed and re-generated.
\end{proof}

Due to the above Lemma, if we show that a derivation (i.e. a sequence of rule applications) can produce terms only from a finite  set, 
it is clear that when no new constrained literal is produced, saturation is reached.
First notice that

\begin{lemma}\label{lem:flat}
 Every constrained literal $L\,\Vert C$ produced during the run of the algorithm is $\ue$-flat.
\end{lemma}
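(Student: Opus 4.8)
The plan is to argue by induction on the length of the derivation. The initial constrained literals $L\,\Vert\,\emptyset$ are $\ue$-flat by construction (they are the flat literals of the matrix of $\phi$), so it suffices to check that each rule, applied to $\ue$-flat premises, yields an $\ue$-flat conclusion. Throughout, I will use the fact that the constraint part stays within the required format: every equality added by the subprocedure $E$ is, by inspection of its definition, an equality between two $\ue$-free terms (the only productions are $\{t=u\}$ with $t,u$ both $\ue$-free, and unions thereof), so the constraints never acquire existential variables. Hence the only thing to verify is that the literal part $L$ of each conclusion is an $\ue$-flat literal.

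Two structural facts drive the argument. First, every proper subterm of an $\ue$-flat term is either $\ue$-free or a single variable from $\ue$: indeed, in $f(u_1,\dots,u_n)$ the arguments $u_i$ are $\ue$-free or variables from $\ue$ by definition, so the existential variables never occur nested below depth one. Second, whenever $l=r$ is the (oriented, $l>r$) first premise of a Superposition step, its small side $r$ is $\ue$-free or a variable from $\ue$: this is exactly what the two orientation clauses (i) and (ii) for $\ue$-flat equalities guarantee. Finally, recall that an inference rule is never applied to an $\ue$-free premise, so in a Superposition or Reflexion step the rewriting equation $l=r$ is not $\ue$-free, and in particular $l$ is not $\ue$-free.

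For the two Superposition rules the conclusion is $s[r]_p=t$ (resp.\ $s[r]_p\neq t$), obtained by replacing $s_{\vert p}$ by $r$. Since the inference requires $E(s_{\vert p},l)$ not to fail and $l$ is not $\ue$-free, the first clause in the definition of $E$ forbids $s_{\vert p}$ from being $\ue$-free; by the first structural fact, $s_{\vert p}$ is then either $s$ itself (i.e.\ $p$ is the root) or a single variable from $\ue$. If $p$ is the root, the conclusion is $r=t$ (resp.\ $r\neq t$), and both $r$ and the small side $t$ are $\ue$-free or variables from $\ue$, so the literal is $\ue$-flat. If $s_{\vert p}$ is a variable from $\ue$, then substituting $r$ (again $\ue$-free or a variable from $\ue$) turns an argument of the flat term $s$ into an $\ue$-free term or another existential variable, so $s[r]_p$ stays $\ue$-flat. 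The Reflexion rule is immediate, since its conclusion $\bot\,\Vert\,C\cup E(t,u)$ has the vacuously $\ue$-flat empty clause as its literal part.

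For Demodulation the key point is the restriction imposed in Remark~\ref{rem:simpl}: the rewriting premise has the shape $e_j=t(\uy)\,\Vert\,D$ with $t$ being $\ue$-free, and the rule fires at a position $p$ with $L_{\vert p}\equiv e_j$. Thus it replaces an occurrence of the existential variable $e_j$ — which in the $\ue$-flat literal $L$ sits either as a whole side or as an immediate argument — by the $\ue$-free term $t(\uy)$; in both cases the result $L[r]_p$ remains $\ue$-flat, while the constraint $C$ is left unchanged. I expect the delicate step to be exactly the position analysis for Superposition: one must be sure that a rewrite can never reach a deeply nested $\ue$-free subterm, which would destroy flatness. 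This is precisely what is ruled out by the failure conditions of $E$ together with the observation that existential variables occur only at depth one in $\ue$-flat terms.
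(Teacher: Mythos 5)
Your proof is correct and follows essentially the same route as the paper's own (much terser) argument: induction over the derivation, observing that Demodulation restricted as in Remark~\ref{rem:simpl} only rewrites an occurrence of some $e_j$ into an $\ue$-free term, and that a Superposition step, since $s$ and $l$ are $\ue$-flat, either takes place at the root or replaces an existential-variable argument by a term that is itself $\ue$-free or a variable from $\ue$. Your extra verifications (that constraints produced by $E$ stay $\ue$-free, and the position analysis via the failure conditions of $E$ and the orientation of the ordering) merely make explicit what the paper's proof leaves implicit.
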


\begin{proof} The constrained literals from initialization are $\ue$-flat. The Demodulation rule, applied according to Remark~\ref{rem:simpl},
produces an $\ue$-flat literal out of an $\ue$-flat literal. The same happens for the Superposition rules: in fact, since both the terms $s$ and $l$ from these rules  are $\ue$-flat,
a Superposition may take place at root position or may rewrite some $l\equiv e_j$ with $r\equiv e_i$ or with $r\equiv t(\uy)$.
\end{proof}

There are in principle infinitely many $\ue$-flat terms that can be generated out of the  $\ue$-flat terms occurring in $\phi$  (see the above counterexample).
 We show however that  only finitely many $\ue$-flat terms can in fact occur during  saturation and that one can determine in advance the finite set they are taken from. 
 
 To formalize this idea, 
let us  introduce a hierarchy of $\ue$-flat terms. Let $D_0$ be the $\ue$-flat terms occurring in $\phi$ and let $D_{k+1}$ be the set of $\ue$-flat terms obtained by simultaneous rewriting of an $\ue$-flat term from $\bigcup_{i\leq k} D_i$ via rewriting rules of the kind $e_j\to t_j(\uy)$ where the $t_j$ are $\ue$-flat $\ue$-free terms from $\bigcup_{i\leq k} D_i$. The \emph{degree} of an $\ue$-flat term is the minimum $k$ such that it belongs to set $D_k$  (it is necessary to take the minimum because the same term can be obtained in different stages and via different rewritings).\footnote{ Notice that, in the above definition of degree, constraints (attached to the rewriting rules occurring in our calculus) are ignored.}

\begin{lemma}\label{lem:deg}
 Let the $\ue$-flat term $t'$ be obtained by a rewriting $e_j\to u(\uy)$ from the $\ue$-flat term $t$; then, if $t$ has degree $k>1$ and $u$ has degree at most $k-1$, we have that $t'$ has degree at most $k$.
\end{lemma}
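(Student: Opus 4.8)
The plan is to unwind the definition of degree and show that the single rewriting step $e_j\to u$ can be folded into the simultaneous rewriting that already witnesses the degree of $t$. First I would dispose of the trivial case: if $e_j$ does not occur in $t$, then $t'\equiv t$ and there is nothing to prove, so I assume $e_j$ occurs in $t$. Since $t$ has degree $k>1$, it lies in $D_k=D_{(k-1)+1}$, so by the definition of the hierarchy there is an $\ue$-flat term $s$ of degree at most $k-1$ together with a family of rules $e_l\to s_l(\uy)$ (with each $s_l$ an $\ue$-free term of degree at most $k-1$) whose simultaneous application to $s$ produces $t$.

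The key structural observation is that each right-hand side $s_l$ is $\ue$-free, so the rewriting never introduces a fresh existential variable; consequently every occurrence of an existential variable in $t$ must descend from an occurrence already present in $s$ and left untouched. Since the rewriting acts in an all-or-nothing fashion on each existential variable (a rule is keyed on the variable $e_l$, not on a single position), the assumption that $e_j$ survives into $t$ forces $e_j$ not to be among the rewritten variables. I can therefore enlarge the family with the extra rule $e_j\to u(\uy)$ without any conflict. Applying the enlarged family simultaneously to $s$ performs all the previous replacements and, in addition, rewrites precisely the surviving occurrences of $e_j$ to $u$; as $u$ is $\ue$-free these are exactly the occurrences of $e_j$ in $t$, so the outcome is exactly $t'$.

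It then remains to read off the degree. The base term $s$ has degree at most $k-1$, and every right-hand side now used — the old $s_l$ and the new $u$ — is an $\ue$-free term of degree at most $k-1$ (this is exactly where the hypothesis that $u$ has degree at most $k-1$ is spent). Hence $t'$ is obtained by a simultaneous rewriting of a term from $\bigcup_{i\le k-1}D_i$ using $\ue$-free terms from $\bigcup_{i\le k-1}D_i$, so $t'\in D_{(k-1)+1}=D_k$ and $t'$ has degree at most $k$. I expect the only delicate point to be the bookkeeping in the middle step: justifying that the old rewriting and the new rule do not interfere and that their simultaneous application reproduces $t'$ exactly. This rests entirely on the two facts that right-hand sides are $\ue$-free and that rewriting is all-or-nothing per existential variable, so that a surviving $e_j$ was genuinely untouched; the rest is definitional unwinding, with $\ue$-flatness of all intermediate terms guaranteed by Lemma~\ref{lem:flat}.
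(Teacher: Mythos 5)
Your overall plan---fold the step $e_j\to u$ into the simultaneous rewriting that witnesses $t\in D_k$, and observe that all right-hand sides involved have degree at most $k-1$---is exactly the idea of the paper's (very terse) proof. However, your execution rests on a reading of ``rewriting'' that is not the one the definitions support, and the step where you invoke it fails. You assume that rewriting acts ``in an all-or-nothing fashion on each existential variable'', i.e.\ that a rule $e_l\to s_l$ replaces \emph{every} occurrence of $e_l$, both in the simultaneous rewriting defining the hierarchy $D_0,D_1,\dots$ and in the step $t\to t'$ of the lemma. In the paper both notions are per-occurrence: a simultaneous rewriting replaces a set of occurrences at parallel positions (different occurrences of the same $e_l$ may be rewritten or left untouched), and the lemma is applied, inside the proof of Proposition~\ref{lem:R}, to rewritings produced by Demodulation or Superposition \emph{at a single position} $p$. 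The hierarchy must indeed allow partial rewriting: if $f(e_1,e_1)$ occurs in $\phi$ and Demodulation with $e_1=y_1$ is applied at the first argument position only, the term $f(y_1,e_1)$ occurs in the derivation, and under your all-or-nothing semantics it would have no degree at all, which would break the termination argument of Proposition~\ref{lem:R}.

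Under the per-occurrence reading, two of your claims are false: (i) ``$e_j$ survives into $t$ forces $e_j$ not to be among the rewritten variables''---some occurrences of $e_j$ in $s$ may have been rewritten in $s\to t$ while others survive; and (ii) ``applying the enlarged family \dots\ the outcome is exactly $t'$''---if $t\to t'$ rewrites only one of several occurrences of $e_j$ in $t$, your construction rewrites them all and produces a different term. The repair is the positional argument the paper gives, and it is short: because every right-hand side used in $s\to t$ is $\ue$-free, each occurrence of $e_j$ in $t$ sits at a position copied unchanged from $s$, hence parallel to all the positions rewritten in $s\to t$; so one enlarges the witnessing parallel rewriting with the replacement $e_j\to u$ performed at exactly those inherited positions that the step $t\to t'$ rewrites (not necessarily at all of them). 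This produces $t'$ directly from $s$ by one simultaneous rewriting whose right-hand sides all have degree at most $k-1$, whence $t'\in D_k$. Your first structural observation (right-hand sides are $\ue$-free, so occurrences of existential variables in $t$ descend from untouched occurrences in $s$) is precisely what makes this positional version work; it is the all-or-nothing gloss on top of it that has to go.
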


\begin{proof}
 This is clear, because at the $k$-stage one can directly produce $t'$ instead of just $t$: in fact, all rewriting producing directly $t'$ replace an occurrence of some $e_i$ by an $\ue$-free term, so they are all done in parallel positions. 
 \end{proof}

\begin{proposition}\label{lem:R}
 The saturation of the initial set of $\ue$-flat constrained literals always terminates after finitely many steps.
\end{proposition}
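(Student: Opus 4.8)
The plan is to reduce termination to a finiteness statement about terms and then to prove that statement via the degree hierarchy. By Lemma~\ref{lem:flat} every constrained literal generated is $\ue$-flat, so all terms appearing in the derivation are $\ue$-flat; and by Lemma~\ref{lem:simplif} every simplifiable literal is eventually removed (and, if regenerated, removed again). Consequently, as already noted before the statement, it suffices to exhibit a \emph{finite} set $\mathcal{T}$ of $\ue$-flat terms from which every term occurring during saturation is drawn: then only finitely many constrained literals $L\,\Vert\,C$ can be formed, since each constraint $C$ is a conjunction of equalities between $\ue$-free terms of $\mathcal{T}$ and there are finitely many such conjunctions up to logical equivalence. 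Because Demodulation admits no infinite sequence of consecutive applications and the inference rules can only add literals from this finite pool, no infinite derivation is possible and a saturated set is reached.

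First I would show that the terms are confined to a finite set by bounding their \emph{degree}. I claim there is a bound $N$, computable in advance from $\phi$ and depending essentially only on the number $n$ of existential variables $\ue$, such that every $\ue$-flat term occurring during saturation has degree at most $N$. Granting this, $\mathcal{T}:=\bigcup_{k\le N} D_k$ is the desired finite set: each $D_k$ is finite because it is obtained from the finite set $\bigcup_{i<k} D_i$ by finitely many simultaneous rewritings $e_j\to t_j(\uy)$ with $\ue$-free right-hand sides taken from $\bigcup_{i<k} D_i$, so by induction on $k$ every $D_k$, and hence $\mathcal{T}$, is finite.

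The degree bound itself I would establish by induction along the derivation, inspecting each rule. The only way a Superposition or a Demodulation step creates a genuinely new term is by rewriting an occurrence of some $e_j$ with an $\ue$-free term $u(\uy)$ --- exactly the operation to which Lemma~\ref{lem:deg} applies; rewrites of the other admissible shape (replacing a compound $\ue$-flat term by an existential variable, as oriented by case (ii) of the ordering) do not raise the degree. Lemma~\ref{lem:deg} then guarantees that substituting into a degree-$k$ term a definition $e_j\to u$ with $u$ of degree at most $k-1$ leaves the degree at most $k$, so degree can strictly grow only when the substituted definition has the \emph{same} degree as the term rewritten. The policy of Remark~\ref{rem:simpl} is what stops this from iterating without bound: Demodulation is applied eagerly, with priority, through definitions $e_j=t(\uy)$, and for each $e_j$ only the smallest such definition is retained, so distinct definitions for the same $e_j$ never coexist and cannot simplify one another. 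Tracing a strictly increasing tower of degrees back through the successive substitutions that produced it therefore yields a chain of pairwise distinct existential variables, whose length is bounded by $n$; this furnishes $N$.

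I expect the crux to be precisely this last step: proving rigorously that an increasing chain of degrees forces the underlying existential variables to be pairwise distinct. This needs the orientation of $\ue$-flat literals fixed by the chosen (transfinite Knuth--Bendix) reduction ordering to be combined carefully with the ``keep the smallest definition'' discipline, so as to rule out that a Superposition between two non-definition literals covertly reinstates an already-eliminated existential variable and lengthens the chain --- in other words, one must check that the policy genuinely blocks the pathological generation exhibited by $\{e=y,\ f(e)=e\}$ in the discussion preceding the statement.
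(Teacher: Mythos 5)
Your reduction of termination to a finiteness claim about the terms occurring in a derivation, and your identification of the degree hierarchy, Lemma~\ref{lem:deg}, Lemma~\ref{lem:simplif} and the policy of Remark~\ref{rem:simpl} as the relevant tools, match the paper's strategy exactly. But the proposal stops at the decisive point, and you say so yourself: you never prove that a strictly increasing tower of degrees forces pairwise distinct existential variables; you only assert that tracing the substitutions back ``yields a chain of pairwise distinct existential variables''. That claim is not a routine verification --- it is the entire content of the paper's proof of Proposition~\ref{lem:R}. The paper's device is a mutual induction on the length of the derivation over two invariants that tie degrees to the \emph{constraints}: saying that ``$C$ allows the explicit definition of $e_j$ in $S$'' when $S$ contains some $e_j=t(\uy)\,\Vert\,D$ with $D\subseteq C$, it proves (1) that if a term of degree $k$ occurs in the literal of $L\,\Vert\,C\in S$, then $C$ allows explicit definitions of $k$ distinct variables, and (2) that if $L$ is $e_i=t(\uy)$ with $t$ of degree $k$, then either $e_i=t(\uy)\,\Vert\,C$ is simplifiable or $C$ allows $k+1$ distinct definitions. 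The distinctness you leave open falls out of non-simplifiability: if $C$ allowed a definition of $e_i$ itself, the literal $e_i=t(\uy)\,\Vert\,C$ would be simplifiable by it. Since there are only $n$ existential variables, invariant (1) caps every degree at $n$.

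Moreover, the sketch you give for closing the gap rests on a misreading of Remark~\ref{rem:simpl}. The policy does \emph{not} guarantee that ``distinct definitions for the same $e_j$ never coexist'': it only prefers the smaller right-hand side when two definitions of $e_j$ carry the \emph{same} constraint $D$, and Demodulation is permitted only under constraint inclusion ($C\supseteq D$). Definitions $e_j=t_1(\uy)\,\Vert\,D_1$ and $e_j=t_2(\uy)\,\Vert\,D_2$ with incomparable $D_1,D_2$ can perfectly well coexist in $S$; this is exactly why the paper's bookkeeping must be constraint-relative ($D\subseteq C$) rather than a global ``one definition per variable'' discipline, and why a naive trace-back through the derivation history --- where definitions can also be removed by Demodulation and later regenerated --- does not obviously produce pairwise distinct variables. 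So your approach is the right one and all the supporting lemmas are correctly marshalled, but the proposal has a genuine gap at its crux, and the auxiliary claim you intend to fill it with is false as stated.
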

		
\begin{proof} 
  We show that  all $\ue$-flat terms that may occur during saturation  have at most degree $n$ (where $n$ is the cardinality of $\ue$). This shows that the saturation must terminate, because only finitely many terms may occur in a derivation (see the above observations).
 Let the algorithm during saturation  reach  the status $S$; we say that \emph{a constraint $C$ 
 allows the explicit definition of $e_j$ in $S$} iff $S$ contains a constrained literal of the kind $e_j=t(\uy)\, \Vert D$ with $D\subseteq C$. Now we show by mutual induction two facts concerning a constrained literal $L\,\Vert \, C\in S$: 
 \begin{compactenum}[(1)]
  \item if an $\ue$-flat term $u$ of degree $k$ occurs in $L$, then $C$ allows the explicit definition of $k$ different $e_j$ in $S$;
  \item if $L$ is of the kind $e_i=t(\uy)$, for an $\ue$-flat $\ue$-free term $t$ of degree $k$, then either $e_i=t\,\Vert\, C$ can be simplified
  in $S$ or $C$ allows the explicit definition of $k+1$ different $e_j$ in $S$ ($e_i$ itself is of course included among these $e_j$).
 \end{compactenum}
    Notice that (1) is
  sufficient to exclude that any $\ue$-flat term of degree bigger than $n$ can occur in a  constrained literal arising during the saturation process.
  
  We prove (1) and (2) by induction on the length of the derivation   leading to $L\,\Vert \, C\in  S$. Notice that it is sufficient to check that (1) and (2) hold for the first time where $L\,\Vert\,  C\in  S$ because if $C$ allows the explicit definition of a certain variable
  in $S$, it will continue to do so in any $S'$ obtained from $S$ by continuing the derivation (the definition may be changed by the Demodulation rule, but the fact that $e_i$ is explicitly defined is forever). Also, by Lemma~\ref{lem:simplif}, a literal cannot become non simplifiable if it is simplifiable.
  
  (1) and (2) are evident if $S$ is the initial status. To show (1), suppose that $u$ occurs for the first time in $L\,\Vert\,C$ as the effect of the application of a certain rule: 
  we can freely assume that $u$ does not occur in the literals from the premisses of the rule (otherwise induction trivially applies) and that $u$ 
  of degree $k$
  is obtained by rewriting \emph{in a non-root position} 
  some $u'$ occurring in a constrained literal $L'\,\Vert\, D'$ via some $e_j\to t\, \Vert\, D$. This might be the effect of a Demodulation or Superposition in a non-root position (Superpositions in root position do not produce new terms).
  If $u'$ has degree $k$, then by induction $D'$ contains the required $k$ explicit definitions, and we are done because $D'$ is included in $C$. If $u'$ has lower degree, then $t$ must have degree at least $k-1$ (otherwise $u$ does not reach degree $k$ by Lemma~\ref{lem:deg}). Then by induction on (2), the constraint $D$ (also included in $C$) has $(k-1)+1=k$ explicit definitions
  %
  %
  (when a constraint $e_j\to t\, \Vert D$ is selected for Superposition 
  or for making Demodulations in a non-root position, it is itself not simplifiable according to the procedure explained in Remark~\ref{rem:simpl}).
  
   To show (2), we analyze the reasons why the non simplifiable constrained literal $e_i=t(\uy)\, \Vert\, C$ is produced  (let $k$ be the degree of $t$).
   Suppose it is produced from $e_i=u'\,\Vert\, C$ via Demodulation with $e_j= u(\uy)\, \Vert\,  D$ (with $D\subseteq C$) in a non-root position; if $u'$ 
   has degree at least $k$,
    we apply induction for (1) to $e_i=u'\,\Vert\, C$: by such induction hypotheses, we get $k$ explicit definitions
   in $C$ and we can add to them the further explicit definition $e_i=t(\uy)$ (the explicit definitions from $C$ cannot concern $e_i$ because 
   $e_i=t(\uy)\, \Vert\, C$ is not simplifiable).
   Otherwise, $u'$ has degree less than $k$ and $u$ has degree 
   at least $k-1$ by Lemma~\ref{lem:deg} 
   (recall that $t$ has degree $k$): 
   by induction, $e_j= u\, \Vert\, D$ is not simplifiable (it is used as the active part of a Demodulation in a non-root position, see Remark~\ref{rem:simpl}) and supplies $k$ explicit definitions, inherited by $C\supseteq D$. Note that
    $e_i$ cannot have a definition in $D$, otherwise $e_i=t(\uy)\,\Vert\, C$ would be simplifiable, so with $e_i=t(\uy)\,\Vert\, C$
   we get the required $k+1$ definitions.
   
   The remaining case is when $e_i=t(\uy)\, \Vert\, C$ is produced via Superposition Right. 
   Such a Superposition might be at root or at a non-root position. We first analyse the case of a root position.
   This might be via $e_j=e_i\,\Vert\, C_1$ and
   $e_j=t(\uy)\, \Vert\, C_2$ (with $e_j>e_i$ and $C=C_1\cup  C_2$ because $E(e_j,e_j)=\emptyset$), but in such a case one can easily apply induction. Otherwise, we have a different kind of Superposition at root position: $e_i=t(\uy)\, \Vert\, C$ is obtained from $s=e_i\,\Vert \, C_1$ and 
   $s'=t(\uy)\, \Vert\, C_2$, with $C=C_1\cup C_2\cup E(s,s')$. In this case, by induction for (1), $C_2$ supplies $k$ explicit definitions, to be inherited by $C$. Among such definitions, there cannot be an explicit definition of $e_i$ otherwise $e_i=t(\uy)\, \Vert\, C$ would be simplifiable, so again  we get the required $k+1$ definitions.
   
   In case of a Superposition at a non root-position, we have that $e_i=t(\uy)\, \Vert \, C$ is obtained from $u'=e_i\,\Vert \, C_1$ and 
   $e_j=u(\uy)\, \Vert\, C_2$, with $C=C_1\cup C_2$; here $t$ is obtained from $u'$ by rewriting $e_j$ to $u$. This case is handled similarly to the case where $e_i=t(\uy)\, \Vert \, C$ is obtained via Demodulation rule.
\end{proof}

	Having established termination, we now prove that our calculus computes covers; to this aim, we rely on refutational completeness of unconstrained Superposition Calculus (thus, our technique resembles the technique used \cite{Uwe94,Uwe13} in order to prove refutational completeness of hierarchic superposition,
	although it is not clear whether Theorem~\ref{thm:main} below can be derived from the results concerning 
	hierarchic superposition -
	we are not  just proving refutational completeness and we need to build proper superstructures): 
		
\begin{theorem}\label{thm:main}
Suppose that the above algorithm, taking as input the primitive $\ue$-flat formula $\exists \ue\,\phi(\ue, \uy)$, gives as output the quantifier-free formula $\psi(\uy)$. Then the latter is a cover of $\exists \ue\,\phi(\ue, \uy)$.
\end{theorem}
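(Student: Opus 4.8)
The plan is to verify that $\psi(\uy)$ meets the two conditions of Lemma~\ref{lem:cover}, keeping in mind that in this section the background theory is empty, so ``model of $T$'' means an arbitrary $\Sigma$-structure and $T\models$ means plain logical validity. First I would dispose of condition (i), the soundness direction. Reading each constrained literal $L\,\Vert\, C$ as the Horn clause $\bigwedge C\to L$, every rule of the calculus is logically sound: for the Superposition and Reflexion rules this uses the remark that $\bigwedge E(t,u)\to t=u$ is universally valid, while Demodulation only rewrites with an oriented equation already available under its constraint. Hence, viewing $\ue,\uy$ as constants, every generated literal $L\,\Vert\, C$ satisfies $\phi(\ue,\uy)\models(\bigwedge C\to L)$. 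Since the output conjuncts are precisely the $\ue$-free ones, $\psi(\uy)$ is $\ue$-free and $\phi(\ue,\uy)\models\psi(\uy)$; existentially quantifying $\ue$ yields $\models\forall\uy\,(\exists\ue\,\phi\to\psi)$, which is condition (i).

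The real work is condition (ii), and here I would argue by contradiction using refutational completeness of the ordinary (unconstrained) Superposition Calculus. Fix a model $\cM$ and a tuple $\ua$ with $\cM\models\psi(\ua)$, and suppose that no extension $\cN\supseteq\cM$ satisfies $\exists\ue\,\phi(\ue,\ua)$. By the Robinson Diagram Lemma together with compactness this means that, after replacing $\uy$ by the names $\ua$ and treating $\ue$ as fresh constants, a finite part of $\Delta(\cM)$ together with the flat matrix of $\phi(\ue,\ua)$ is inconsistent. I would then present the relevant ground $\ue$-free equalities true in $\cM$ as a convergent rewrite system $R$, orienting them by the chosen reduction ordering; since these are the genuine equalities of a structure, a convergent $\ue$-free presentation can be arranged, possibly after passing to a suitable superstructure of $\cM$ that supplies canonical representatives and fresh witnesses. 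Running unconstrained superposition on $R$ together with the flat literals of $\phi$ then derives $\bot$.

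The heart of the argument is to simulate this refutation against the \emph{constrained} saturated set $S$ produced by our algorithm. Proceeding by induction on the refutation, I would show that each clause occurring in it that still mentions $\ue$ coincides with the literal $L$ of some constrained literal $L\,\Vert\, C\in S$ whose constraint $C$ consists of $\ue$-free equalities that are all valid in $\cM$ under $\ua$ (equivalently, collapse to identities under $R$). The engine of this step is Lemma~\ref{lem:E}: whenever the unconstrained calculus unifies two $\ue$-flat terms because they share an $R$-normal form, the subprocedure $E$ invoked by the corresponding constrained rule does \emph{not} fail and returns only equalities that are again $R$-valid, so the simulated constrained inference stays inside $S$ by saturation, modulo the Demodulation bookkeeping governed by Remark~\ref{rem:simpl} and Lemma~\ref{lem:simplif}. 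When the refutation finally produces $\bot$, necessarily through a Reflexion on some $t\neq u$ with equal $R$-normal forms, the simulation yields a constrained literal $\bot\,\Vert\, C\in S$ with $C$ being $\ue$-free and $R$-valid. But then $\neg\bigwedge C$ is one of the conjuncts of $\psi$, while every equality in $C$ holds in $\cM$ under $\ua$; this contradicts $\cM\models\psi(\ua)$ and thereby establishes condition (ii).

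I expect the main obstacle to be exactly this simulation: matching the unconstrained refutation step-for-step with constrained literals of $S$ so that the accumulated constraints stay $\ue$-free and $R$-valid, and so that saturation of $S$ genuinely forces every needed inference to be already present. Subtleties arise from superpositions at root versus non-root positions, from the interplay between Demodulation and the ``explicit definition'' bookkeeping used in the termination proof (Proposition~\ref{lem:R}), and from the need, flagged in the statement, to build a proper superstructure of $\cM$ in which $R$ is convergent and the existential variables lacking an explicit $\ue$-free definition can be interpreted by fresh elements.
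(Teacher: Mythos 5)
Your plan correctly identifies all the ingredients of the paper's own proof -- condition (i) by soundness of the rules, and for condition (ii) the Robinson diagram of $\cM$, the rewrite system $R=\{y_i=c_i\}$, Lemma~\ref{lem:E}, saturation of $S$, and refutational completeness of unconstrained Superposition -- but you run the argument contrapositively (extract a refutation and simulate it step by step against $S$), whereas the paper runs it directly: it shows that the grounded clause set saturates \emph{without} producing $\bot$, and then invokes the model-existence half of the completeness theorem to get the extension $\cM'$ demanded by Lemma~\ref{lem:cover}. This difference is not merely cosmetic, because your simulation induction has a genuine gap at its base case. The refutation you extract starts from $\Delta(\cM)$ together with the \emph{original} flat literals of $\phi(\ue,\ua)$; but those literals need not occur in $S$ at all (Demodulation removes clauses during constrained saturation), and their $\Delta^+(\cM)\cup R$-normal forms need not be normal forms of any literal in $S$. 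Concretely, from $\{e=y_1,\ f(e)=y_2\}$ the constrained calculus produces $S=\{\,e=y_1\,\Vert\,\emptyset,\ f(y_1)=y_2\,\Vert\,\emptyset\,\}$; the initial literal $f(e)=y_2$ normalizes to $f(e)=c_2$, which is not the normal form of any literal of $S$. So the invariant ``every $\ue$-clause of the refutation is (the normal form of) the literal of some $L\,\Vert\,C\in S$ with $C$ true in $\cM$'' fails from step zero, and the induction never starts. A related problem is that an arbitrary refutation may superpose diagram equations $f(c_1,\dots,c_k)=c_{k+1}$ into such non-normalized $\ue$-clauses; these inferences have no constrained counterpart either, so they also escape your invariant.

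The missing idea is precisely the first line of the paper's proof: $\exists\ue\,\phi(\ue,\uy)$ is \emph{logically equivalent} to the conjunction of the clauses $\bigwedge C\to L$ for $L\,\Vert\,C\in S$ (with $\exists\ue$ prefixed to the part mentioning $\ue$). With this equivalence, and using the hypothesis $\cM\models\psi(\ua)$ to dispose of the $\ue$-free clauses, the set to be tested for consistency can be taken to be $\Delta(\cM)\cup R\cup\{\bigwedge C\to L : L\,\Vert\,C\in S\}$ instead of $\Delta(\cM)\cup\{\text{literals of }\phi\}$. Then normalization by $R\cup\Delta^+(\cM)$ makes the constraints disappear (true ones vanish, false ones leave the clause subsumed by $\Delta^-(\cM)$), the survivors are exactly the normalized $S$-literals $\tilde S$, and your Lemma~\ref{lem:E} argument -- which is correct and is exactly the paper's key step, including the commutation of constrained Demodulation with normalization from Remark~\ref{rem:simpl} -- shows that this set is already saturated. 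At that point there is nothing left to simulate: no refutation exists, the assumed inconsistency is contradicted (equivalently, the paper concludes positively that a model $\cM'$ exists, which by the Diagram Lemma embeds $\cM$ and realizes $\exists\ue\,\phi(\ue,\ua)$). Also note that no ``superstructure of $\cM$'' is needed to make the diagram convergent: the oriented positive diagram plus $R$ is already a convergent ground rewrite system, every term normalizing to the name of its value; the extension of $\cM$ is the \emph{output} of the completeness theorem, not an input to the construction. So your approach is salvageable, but only by inserting the equivalence step first, after which it essentially collapses into the paper's proof.
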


\begin{proof}
 Let $S$ be the saturated set of constrained literals produced upon termination of the algorithm; let $S=S_1\cup S_2$, where $S_1$ contains the constrained literals in which the $\ue$ do not occur and $S_2$ is its complement.
 Clearly $\exists \ue\,\phi(\ue, \uy)$ turns out to be logically equivalent to
 $$
  \bigwedge_{L\,\Vert\, C\in S_1} (\bigwedge C\to L) \wedge \exists \ue \bigwedge_{L\,\Vert\, C\in S_2} (\bigwedge C\to L)
 $$
so, as a consequence, in view of Lemma~\ref{lem:cover} it is sufficient to show that every model $\cM$ satisfying 
$\bigwedge_{L\,\Vert\, C\in S_1} (\bigwedge C\to L)$ 
 via an assignment $\cI$ to the variables $\uy$ can be embedded into a model $\cM'$ such that for a suitable extension $\cI'$ of $\cI$ to the variables $\ue$ we have that $(\cM', \cI')$ satisfies also $\bigwedge_{L\,\Vert\, C\in S_2} (\bigwedge C\to L)$.

Fix $\cM, \cI$ as above. The  diagram $\Delta(\cM)$ of $\cM$ is obtained as follows. We take one free constant for each element of the support of $\cM$ (by L\"owenheim-Skolem theorem you can keep $\cM$ at most countable, if you like)
and we put in $\Delta(\cM)$ all the literals 
of the kind $f(c_1, \dots, c_k)=c_{k+1}$ and $c_1\neq c_2$ which are 
true in $\cM$ (here the $c_i$ are names for the elements of the support of $\cM$). Let $R$ be the set of ground equalities of the form $y_i=c_i$, where
$c_i$ is the name of $\cI(y_i)$. Extend our reduction ordering in the natural way (so that  $y_i=c_i$ 
and $f(c_1, \dots, c_k)=c_{k+1}$
are oriented from left to right).
Consider now the set of clauses 
\begin{equation}\label{eq:clauses}
\Delta(\cM)~\cup~ R~\cup~ \{\bigwedge C\to L\mid (L\,\Vert\, C)\in S\}
\end{equation}
(below, we distinguish the positive and the negative literals of $\Delta(\cM)$ so that  $\Delta(\cM)=\Delta^+(\cM)\cup \Delta^-(\cM)$). 
We want to saturate the above set in the standard Superposition Calculus. Clearly the rewriting rules in $R$, used as reduction rules, replace everywhere 
$y_i$ by $c_i$ inside the clauses of the kind $\bigwedge C\to L$. 
At this point, the negative literals from the equality constraints all disappear: if they are true in $\cM$, they $\Delta^+(\cM)$-normalize to trivial equalities $c_i
= c_i$ (to be eliminated by standard reduction rules) and if they are false in  $\cM$ they become part of  clauses
subsumed by true inequalities from $\Delta^-(\cM)$.
Similarly all the $\ue$-free literals not coming from $\Delta(\cM)\cup R$ get removed. 
Let $\tilde S$ be the set of survived  literals
involving the $\ue$ (they are not constrained anymore and they are $\Delta^+(\cM)\cup R$-normalized): we show that they cannot produce new clauses. Let in fact $(\pi)$ be an inference 
from the Superposition Calculus~\cite{NR} applying to them. Since no superposition with $\Delta(\cM)\cup R$ is possible, this inference must involve only literals from $\tilde S$; suppose it produces a literal 
$\tilde L$ from
the literals $\tilde L_1, \tilde L_2$ (coming via $\Delta^+(\cM)\cup R$-normalization from $L_1\,\Vert\, C_1\in S$ and $L_2\,\Vert\, C_2\in S$) as parent clauses. Then, by Lemma~\ref{lem:E}, our constrained inferences produce a constrained literal $L\,\Vert\, C$ such that the clause $\bigwedge C\to L$ normalizes to $\tilde L$ via $\Delta^+(\cM)\cup R$. Since $S$ is saturated, the constrained literal $L\,\Vert\, C$, after simplification, belongs to $S$. Now simplifications via our Constrained Demodulation and $\Delta(\cM)^+\cup R$-normalization commute (they work at parallel positions, see Remark~\ref{rem:simpl}), so
the inference $(\pi)$ is redundant because $\tilde L$
simplifies to  a literal already in $\tilde S\cup \Delta(\cM)$.

Thus the set of clauses~\eqref{eq:clauses} saturates without producing the empty clause. By the completeness theorem of the Superposition Calculus~\cite{HR,BG,NR}
it has a model $\cM'$. This $\cM'$ by construction fits our requests by Robinson Diagram Lemma. 
\end{proof}

Theorem~\ref{thm:main} also proves the existence of the model completion of (EUF).

\begin{example}
\label{ex:disj}
 We compute the cover of  the primitive formula $\exists e\, f(f(e,y_1),y_2)\neq f(f(e,y_1'),y_2')$.   Flattening gives the set of literals
 $$\{~f(e,y_1)=e_1, ~f(e_1,y_2)=e'_1,~f(e,y_1')=e_2,~ f(e_2,y_2')=e'_2,~ e'_1\neq e'_2 ~\}~~.$$
 Superposition Right produces the constrained literal $ e_1= e_2  \,\Vert\, \{y_1=y'_1\}$; supposing that we have $e_1> e_2$, Superposition Right
 gives first $f(e_2,y_2)=e'_1 \,\Vert\, \{y_1=y'_1\}$ and then also $ e'_1=e'_2\,\Vert\, \{y_1=y'_1, y_2=y'_2\}$. Superposition Left and Reflexion now produce $ \bot  \,\Vert\, \{y_1=y'_1, y_2=y'_2\}$. 
 Thus the clause $y_1=y'_1 \wedge y_2=y'_2 \to \bot$ 
 will be part of the output (actually, this will be the only clause in the output).  
\end{example}

%



\begin{example}  We add one more example, taken from~\cite{GM}.
 We compute the cover of  the primitive formula $\exists e\, (s_1=f(y_3,e)\land s_2=f(y_4,e)\land t=f(f(y_1,e), f(y_2,e)))$, where $s_1, s_2, t$ are terms in $\uy$. This example is taken from~\cite{GM}.
  Flattening gives the set of literals
 $$\{~f(y_3,e)=s_1, ~f(y_4, e)=s_2,~ f(y_1,e)=e_1, ~f(y_2, e)=e_2,~ f(e_1,e_2)=t~\}~~.$$
 
 \noindent
 Suppose that we have $e>e_1> e_2>t>s_1>s_2>y_1>y_2>y_3>y_4$.
 Superposition Right between the $3$rd and the $4$th clauses produces the constrained $6$th clause $e_1=e_2 \,\Vert\, \{y_1=y_2\}$. From now on, we denote the application of a Superposition Right to the $i$th and $j$th clauses with $R(i,j)$. We list a derivation performed by our calculus:
 
  $$R(3,4)\implies e_1=e_2\,\Vert\, \{y_1=y_2\} \ \ \ \ (6\mbox{th clause})$$
 $$R(1,2)\implies s_1=s_2\,\Vert\, \{y_3=y_4\} \ \ \ \ (7\mbox{th clause})$$
 $$R(5,6)\implies f(e_2,e_2)=t\,\Vert\, \{y_1=y_2\}\ \ \ \ (8\mbox{th clause})$$
  $$R(1,3)\implies e_1=s_1\,\Vert\, \{y_1=y_3\} \ \ \ \ (9\mbox{th clause})$$
 $$R(1,4)\implies e_2=s_1\,\Vert\, \{y_2=y_3\}\ \ \ \ (10\mbox{th clause})$$
  $$R(2,3)\implies e_1=s_2\,\Vert\, \{y_1=y_4\} \ \ \ \  (11\mbox{th clause})$$
   $$R(2,4)\implies e_2=s_2\,\Vert\, \{y_2=y_4\} \ \ \ \ (12\mbox{th clause})$$
   $$R(5, 9)\implies f(s_1,e_2)=t\,\Vert\, \{y_1=y_3\}\ \ \ \ (13\mbox{th clause})$$
   $$R(5,11)\implies f(s_2,e_2)=t\,\Vert\, \{y_1=y_4\}\ \ \ \ (14\mbox{th clause})$$
   $$R(6,9)\implies e_2=s_1\,\Vert\, \{y_1=y_3, y_1=y_2\} \ \ \ \ (15\mbox{th clause})$$
    $$R(6,11)\implies e_2=s_2\,\Vert\, \{y_1=y_2, y_1=y_4\} \ \ \ \ (16\mbox{th clause})$$
 $$R(8,10)\implies f(s_1,s_1)=t\,\Vert\, \{y_1=y_3, y_2=y_3\}\ \ \ \  (17\mbox{th clause})$$
 $$R(8,12)\implies f(s_2,s_2)=t\,\Vert\, \{y_1=y_4, y_2=y_4\} \ \ \ \ (18\mbox{th clause})$$ 
  $$R(13,12)\implies f(s_1,s_2)=t\,\Vert\, \{y_1=y_3, y_2=y_4\}\ \ \ \ (19\mbox{th clause})$$
  $$R(14,10)\implies f(s_2,s_1)=t\,\Vert\, \{y_1=y_4, y_2=y_3\}\ \ \ \  (20\mbox{th clause})$$
  $$R(9,11)\implies s_1=s_2\,\Vert\, \{y_1=y_3, y_1=y_4\}\ \ \ \  (21\mbox{th clause})$$

  \vspace{2mm}
  \noindent
  The set of clauses above is saturated. The $7$th, $17$th, $18$th, $19$th and $20$th clauses are exactly the output clauses of~\cite{GM}. The non-simplified clauses that do not appear as output in~\cite{GM} are redundant and they could be simplified by introducing a Subsumption rule as an additional simplification rule of our Constrained Superposition Calculus.
 
 \end{example}


\section{Complexity Analysis}
\label{app:complexity}

In the special case where the signature $\Sigma$ contains only unary function symbols, only empty constraints can be generated; 
in case $\Sigma$ contains also relation symbols of arity $n>1$, the only constrained clauses that can be generated have the form $\bot \,\Vert \{t_1=t'_1,\dots, t_{n-1}=t'_{n-1}\}$. 
Also, it is not difficult to see that in a derivation at most one explicit definition $e_i=t(\uy) || \emptyset$ can occur for every $e_i$: as soon as this definition is produced, all occurrences of $e_i$ are rewritten to $t$. This shows that Constrained Superposition computes covers in polynomial time for the empty theory, whenever the signature $\Sigma$ matches the restrictions of Definition~\ref{def:db} for DB schemata. We give here a finer complexity analysis, in order to obtain a quadratic bound.

In this section, \emph{we assume that our signature $\Sigma$ contains only unary function and $m$-ary relation symbols.}
As we already mentioned, the complexity of the saturation becomes polynomial in this restricted case. However, in order to attain an optimized quadratic complexity bound, we need to follow a \emph{different strategy in applying the rules of our constrained superposition calculus} (this different strategy would not be correct for the general case). Thanks to this different strategy, we can make our procedure close to the algorithm of~\cite{GM}: in fact, such algorithm is correct for the case of unary functions and requires only a very minor adjustment for the case of unary functions and $m$-ary relations (the reason why only a very minor adjustment is sufficient is due to the fact that the empty clause is the only constrained clause that is generated during saturation, so that constraints do not become part of a recursive mechanism).

Since relations play a special role in the present restricted context, we prefer to treat them as such, i.e. not to rewrite $R(t_1,\dots, t_n)$ as 
$R(t_1,\dots, t_n)=true$; the consequence is that we need an additional
 Constrained Resolution Rule\footnote{We extend the definition of an $\ue$-flat literal so as to include also the literals of the kind $R(t_1,..,t_n)$ and $\neg R(t_1,..,t_n)$ where the terms $t_i$ are either $\ue$-free terms or variables from $\ue$.}.
We preliminarily notice that when function symbols are all unary, the constraints remain all \emph{empty} during the run of the saturation procedure, except for the case of the newly introduced Resolution Rule below. This fact follows from the observation that
given two terms $u_1$ and $u_2$, procedure $E(u_1,u_2)$ does not fail iff:
\begin{compactenum}[(1)]
\item either $u_1$ and $u_2$ are both terms containing only variables from $\uy$, or  
\item $u_1$ and $u_2$ are terms  that syntactically coincide.
\end{compactenum}
In case (1), $E(u_1,u_2)$ is $\{u_1,u_2\}$ and in case (2), $E(u_1,u_2)$ is $\emptyset$.
In case (1), Superposition Rules are not applicable. To show this, suppose that $u_1 \equiv s_{\vert p}$ and $u_2\equiv l$; then, terms $l$ and $r$ use only variables from $\uy$, and consequently cannot be fed into Superposition Rules, since Superposition Rules are only applied when variables from $\ue$ occur in both premises. Reflexion Rule does not apply too in case (1), because this rule (like any other rule) cannot be applied to an $\ue$-free literal.

Thus, in the particular case of $m$-ary relations and unary functions, the rules of the calculus are the following:

\noindent\begin{tabularx}{\textwidth}{cCl}
\begin{tabular}{@{}c@{}}
\bfseries Superposition
\end{tabular}
&
$\begin{array}{c}
l=r\quad\quad L
\\
\hline  L[r]_p 
\\
\end{array}$
&
\begin{tabular}[t]{@{}rl@{}}
\text{if}&
 (i) $l>r$;\\
 &(ii) if $L\equiv s=t$ or \\
 &\phantom{(ii) }$L\equiv s\neq t$, then\\
 &\phantom{(ii) }$s>t$ and $p\in Pos(s)$;\\
 &(iii) $E(s_{\vert p},l)$ does not fail.
 \end{tabular}
\\[12pt]
\begin{tabular}{@{}c@{}}
\bfseries Resolution
\end{tabular}
&
$\begin{array}{c}
R(t_1,\dots, t_n) \quad\quad \neg R(s_1,\dots,s_n)
\\
\hline  \bot ~\Vert~ \bigcup_i E(s_i,t_i)
\\
\end{array}$
&
\begin{tabular}[t]{@{}rl@{}}
\text{if}& 
$E(s_i,t_i)$ does not fail\\
& for all $i=1,\dots,n$
\end{tabular}
\\[12pt]
\begin{tabular}{@{}c@{}}
\bfseries Reflexion
\end{tabular}
&
$\begin{array}{c}
t\neq u
\\
\hline  \bot 
\\
\end{array}$
&
\begin{tabular}{@{}rl@{}}
\text{if}&
$E(t,u)$ does not fail
\end{tabular}
\\[12pt]
\begin{tabular}{@{}c@{}}
\bfseries Demodulation
\end{tabular}
&
$\begin{array}{c}
L \quad \quad l=r
\\
\hline  L[r]_p 
\\
\end{array}$
&
\begin{tabular}{@{}rl@{}}
\text{if}&
$l>r$ and $L_{\vert p}\equiv l$\\
\end{tabular}
\\
\end{tabularx}
\vskip 1mm
We still restrict the use of our rules to the case where all premises are not $\ue$-free literals; again Demodulation is applied only in the case where $l=r$ is of the kind $e_i=t(\uy)$.
For the order of applications of the Rules, Lemma~\ref{lem:restr} below show that we can apply (restricted) Superpositions, Demodulations, Reflexions and Resolutions in this order and then stop.

An important preliminary observation to obtain such result is that \emph{we do not need to apply Superposition Rules whose left premise $l=r$ is of the kind $e_i=t(\uy)$}: this is because constraints are always empty (unless the constrained clause is the empty clause), so that a Superposition Rule with the left premise $e_i=t(\uy)$ can be replaced by a Demodulation Rule.~\footnote{This is not true in the general case where constraints are not empty, because the Demodulation Rule does not merge incomparable constraints.} If the left premise of Superposition is not of the kind $e_i=t(\uy)$,  then since our literals are $\ue$-flat, it can be either of the kind $e_i=e_j$ (with $e_i> e_j$) or of the kind $f(e_i)= t$. In the latter case $t$ is either $e_k\in \ue$ or it is an $\ue$-free term; for Superposition Left (i.e. for Superposition applied to a negative literal), the left premise can only be $e_i=e_j$,
 because our literals are $\ue$-flat and so negative literals $L$ cannot have a position $p$ such that $L_{\vert p}\equiv f(e_i)$.    

Let $S$ be a set of $\ue$-flat literals with empty constraints; 
we say that $S$ is 
\emph{RS-closed} iff it is closed under \emph{Restricted Superposition Rules}, i.e under Superposition Rules whose left premise is not of the kind $e_i=t(\uy)$. In equivalent terms, as a consequence of the above discussion, $S$ is RS-closed 
iff it satisfies the following two conditions:
\begin{compactitem}
 \item if $\{f(e_i)=t, f(e_i)=v\}\subseteq S$, then $t=v\in S$;
 \item if $\{e_i=e_j, L\} \subseteq S$ and $e_i>e_j$ and $L_{\vert p}\equiv e_i$, then $L[e_j]_p\in S$.
\end{compactitem} 
Since Restricted Superpositions do not  introduce essentially new terms (newly introduced terms are just rewritings of variables with variables), it is clear that we can make a finite set $S$ of $\ue$-free literals RS-closed in finitely many steps. This can be naively done in time quadratic in the size of the formula. As an alternative, we can apply a \emph{congruence closure algorithm to $S$} and produce a set of $\ue$-free constraints $S'$ which is RS-closed and logically equivalent to $S$: the latter can be done in $O(n\cdot log (n))$-time, as it is well-known from the literature~\cite{NeOp,IC,kapurCC}.

\begin{lemma}\label{lem:restr}
 Let $S$ be a RS-closed set of empty-constrained $\ue$-flat literals.
 Then, to saturate $S$ it is sufficient to first exhaustively apply  the Demodulation Rule, and then Reflexion and Resolution Rules.
\end{lemma}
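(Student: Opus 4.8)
The plan is to reduce the whole saturation to the three rules in the stated order by first pinning down, in this restricted signature, exactly which inferences can ever fire. First I would record the shape of the literals. Since all function symbols are unary, the only $\ue$-flat terms that are not $\ue$-free are $e_i$ and $f(e_i)$; moreover $f(e_i)$ can occur only as the whole left-hand side of a positive literal $f(e_i)=a$, and (as already observed in the excerpt) no negative literal can contain $f(e_i)$ at all. Consequently every Superposition inference is classified by the shape of its left premise $l=r$ into exactly three cases: (a) $l\equiv f(e_i)$, where $l$ can only match inside another literal $f(e_i)=a$, producing $r=a$; (b) $l\equiv e_i$, $r\equiv e_j$ with $e_i>e_j$, rewriting an occurrence of $e_i$; and (c) $l\equiv e_i$, $r\equiv t(\uy)$ an explicit definition. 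Case (a) is precisely the first RS-closure condition and case (b) the second, so both are discharged by the hypothesis that $S$ is RS-closed; case (c), since all constraints are empty here, is literally an application of the Demodulation rule. This is why the strategy can omit Superposition entirely.

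Next I would analyse what exhaustive Demodulation does to an RS-closed set. By the policy of Remark~\ref{rem:simpl}, Demodulation uses only explicit definitions $e_j=t(\uy)$, keeps one per variable, and rewrites every other occurrence of a defined $e_j$ to its $\ue$-free right-hand side. Hence in the resulting set $S'$ no defined variable survives outside its own defining literal, which kills case (c): a case-(c) Superposition would now be a trivial self-inference. Crucially, Demodulation rewrites only $\ue$-variables to $\ue$-free terms, so it can never manufacture a new term $f(e_k)$ on an $\ue$-variable nor a new equation $e_a=e_b$ between two undefined variables; thus it creates no genuinely new premise for cases (a) or (b). It may create new explicit definitions, e.g.\ turning $e_i=e_j$ with $e_i$ defined into a definition of $e_j$, but those are themselves consumed by the exhaustive Demodulation.

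The hard part, and the technical core of the lemma, is to show that exhaustive Demodulation \emph{preserves} RS-closure, so that cases (a) and (b) remain saturated in $S'$. I would prove this by a commutation argument in the spirit of the one used in the proof of Theorem~\ref{thm:main}: Demodulation acts at positions occupied by \emph{defined} $\ue$-variables, whereas the matching required by cases (a)/(b) is against $f(e_a)$ or $e_a$ for \emph{undefined} $e_a$, and against equations $f(e_a)=r$ or $e_a=e_b$ whose matched sides mention only undefined variables; these positions do not interfere. Concretely, for condition~2, if $e_a=e_b$ and a literal $L$ with $L_{\vert p}\equiv e_a$ lie in $S'$ (so $e_a,e_b$ are undefined), they are the Demodulation-normal-forms of the same equation $e_a=e_b$ and of a literal $L_0$ already in $S$; RS-closure of $S$ gives $L_0[e_b]_p\in S$, and since $e_a,e_b$ are untouched by Demodulation this literal normalizes exactly to $L[e_b]_p$, which is therefore in $S'$. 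The analogous, slightly easier, bookkeeping for condition~1 uses that the left-hand side $f(e_a)$ of an $f$-literal is untouched when $e_a$ is undefined, so a pair $f(e_a)=u$, $f(e_a)=v$ in $S'$ descends from such a pair in $S$ whose RS-consequence $u=v$ demodulates into $S'$. Together these show $S'$ is closed under all three Superposition cases, i.e.\ Superposition-saturated.

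Finally I would justify postponing Reflexion and Resolution. Both rules produce only the empty clause $\bot$ (possibly constrained, $\bot\,\Vert\,C$), which can be neither a premise of Superposition or Demodulation nor an input to any further inference, and they add no new positive equalities or relational atoms; hence running them after Demodulation cannot re-enable any earlier rule, and exhausting them completes the saturation. It is worth stressing that the order \emph{Demodulation before} Reflexion and Resolution is not merely convenient but necessary: matching in the subprocedure $E$ is purely syntactic, so a latent contradiction such as $R(e_i)$ together with $\neg R(e_j)$ with $e_i,e_j$ sharing the same definition $t(\uy)$ is detected by Resolution only after both have been demodulated to $R(t)$ and $\neg R(t)$.
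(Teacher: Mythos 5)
Your proposal is correct and takes essentially the same approach as the paper's proof: you reduce case-(c) Superpositions (left premise an explicit definition $e_i=t(\uy)$) to Demodulation using the emptiness of constraints, establish via the same parallel-positions/commutation argument that literals surviving exhaustive Demodulation descend from literals of $S$ whose Restricted-Superposition consequences demodulate to the needed conclusions, and conclude by noting that Reflexion and Resolution produce only $\bot$, which cannot feed any further inference. The added details (the separate bookkeeping for the two RS-closure conditions and the remark on why the rule ordering matters) are elaborations of, not departures from, the paper's argument.
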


\begin{proof}
 Let $\tilde S$ be the set obtained from $S$ after having exhaustively applied Demodulation. 
 Notice that the final effect of the reiterated application of Demodulation can be synthetically described by saying that literals in $S$ are rewritten by using some explicit definitions 
 \begin{equation}\label{eq:rew}
 e_{i_1}= t_1(\uy), \dots, e_{i_k}= t_k(\uy)~~.
 \end{equation}
These definitions are either in $S$, or are generated through the Demodulations themselves (we can freely assume that Demodulations are done in appropriate order: first all occurrences of $e_{i_1}$ are rewritten to $t_1$, then  all occurrences of $e_{i_2}$ are rewritten to $t_2$, etc.).\footnote{In addition, if we happen to have, say, two different explicit definitions of $e_{i_1}$ as $e_{i_1}=t_1, e_{i_1}=t'_1$, we decide to use just one of them (and always the same one, until the other one is eventually removed by Demodulation).}
 
 Suppose now that a pair $L, l=r\in \tilde S$ can generate a new literal $L[r]_p$ by Superposition. We know from above that we can limit ourselves to Restricted Superposition, so $l$ is either of the form $e_j$ or of the form $f(e_j)$, where moreover $e_j$ is not among the set $\{e_{i_1}, \dots, e_{i_k}\}$ from~\eqref{eq:rew}.  
 The literals $L$ and $l=r\in \tilde S$ happen to have been obtained from literals $L'$ and $l=r'$ belonging to $S$ by applying the rewriting rules~\eqref{eq:rew}
 (notice that $l$ cannot have been rewritten).
 Since such rewritings must have occurred in positions parallel to $p$ and since $S$ was closed under Restricted Superposition, we must have that $S$ contained the
 literal $L'[r']_p$ that rewrites to $L[r]_p$ by the rewriting  rules~\eqref{eq:rew}. This shows that $L[r]_p$ is already in $\tilde S$ and proves the lemma, because Reflexion and Resolution can only produce the empty clause and no rule applies to the empty clause.
 \end{proof}
 
 \noindent
 Thus the strategy of applying (in this order) 
 \vskip .5mm\noindent
 \centerline{\framebox{Restricted Superposition+Demodulation+Reflexion+Resolution}}
 \par
 \noindent
 always saturates.
 
 To produce an output in optimized format, it is convenient to get it in a dag-like form. This can be simulated via explicit acyclic definitions as follows.
When we write $\mathit{Def}(\ue, \uy)$ (where $\ue, \uy$ are tuples of distinct variables), we mean any flat formula of the kind (let $\ue:=e_1 \dots, e_n$)
$$
\bigwedge_{i=1}^n e_i =t_i
$$
where in the term $t_i$ only the variables $e_1, \dots, e_{i-1}, \uy$ can occur. We shall supply the output in the form 
\begin{equation}\label{eq:dag}
\exists \ue' (\mathit{Def}(\ue', \uy) \wedge \psi(\ue', \uy))
\end{equation}
where the $\ue'$ is a subset of the $\ue$ and $\psi$ is quantifier-free. The \emph{dag-format}~\eqref{eq:dag} is not quantifier-free but can be converted to a quantifier-free formula by unravelling the acyclic definitions of the $\ue'$.

Thus our procedure for computing a cover in dag-format of a primitive formula $\exists \ue\,\phi(\ue, \uy)$ (in case the function symbols of the signature $\Sigma$ are all unary) runs by performing the following steps, one after the other. Let $OUT$ be a quantifier-free formula (initially $OUT$ is $\top$).
\begin{compactenum}[(1)]
\item We preprocess $\phi$ in order to get out of it a RS-closed set $S$ of empty-constrained $\ue$-flat literals.
\item We \emph{mark} the variables $\ue$ in the following way (initially, all variables are unmarked): we scan $S$ and, as soon as we find an equality of the kind $e_i=t$ where all variables from $\ue$ occurring in $t$ are marked, we mark $e_i$. This loop is repeated until no more variable gets marked.
\item If Reflexion is applicable, output $\bot$ and exit.
\item Conjoin  $OUT$ with all literals where, besides the $\uy$, only marked variables occur.
\item For every literal $R(t_1,\dots,e,\dots, t_m)$ that contains at least an unmarked $e$,   scan $S$ until a literal of the type $\neg R(t_1,\dots,e,\dots, t_m)$ is found: then, try to apply Resolution and if you succeed getting $\bot \,\Vert\, \{ u_1=u'_1, \dots, u_m=u'_m\}$ conjoin 
$\bigvee_j u_j\neq u'_j$ to $OUT$.
\item Prefix to $OUT$ a string of existential quantifiers binding all marked variables and output the result.
\end{compactenum}

\noindent
One remark is in order: when running the subprocedures $E(s_i,t_i)$ required by the Resolution Rule  in (5) above, \emph{you must consider all marked variables as part  of the $\uy$} (thus, e.g.
$R(e, t), \neg R(e,v)$ produces $\bot \,\Vert \,\{t=u\}$ if both $t$ and $u$ contain, besides the $\uy$, only marked variables).
\begin{proposition}\label{prop:complex}
  Let $T$ be the theory (EUF) in a signature with unary functions and $m$-ary relation symbols. Consider a primitive
  formula
  $\exists \ue\,\phi(\ue, \uy)$;
   then, the above algorithm returns a cover of $\exists \ue\,\phi(\ue, \uy)$ in dag-format in time 
  $O(n^2)$, where $n$ is the size of $\exists \ue\,\phi(\ue, \uy)$.
\end{proposition}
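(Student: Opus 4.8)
The statement bundles two claims: that the six-step procedure outputs a $T$-cover of $\exists \ue\,\phi(\ue,\uy)$, and that it does so in dag-format within time $O(n^2)$. My plan is to derive correctness from the machinery already in place and then account for the running time step by step, the decisive point being that in this restricted signature the constraints never grow.

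For correctness I would first invoke Lemma~\ref{lem:restr}: the set $S$ obtained after preprocessing in step (1) is RS-closed, and applying Restricted Superposition, Demodulation, Reflexion and Resolution in this order saturates it. By Theorem~\ref{thm:main}, the $\ue$-free clauses of a saturated set already constitute a cover, so it remains to check that steps (2)--(6) extract exactly this information, only repackaged in dag-format. The crux is the marking loop of step (2): I would argue by induction on the number of marking rounds that a variable $e_i$ gets marked iff it admits an acyclic explicit definition $e_i=t_i$ in which, besides $\uy$, only previously-marked variables occur, so that the marked variables are precisely the ones the saturation would Demodulate away. Consequently the formula \eqref{eq:dag} produced in step (6) is, after unravelling the acyclic definitions collected in step (4), logically equivalent to the conjunction of $\ue$-free clauses output by the general algorithm; this unravelled formula is quantifier-free, hence a genuine cover by Theorem~\ref{thm:main} and Lemma~\ref{lem:cover}. (The degenerate case is step (3): with only unary functions, Reflexion applies exactly when some literal normalizes to $t\neq t$, witnessing $T$-unsatisfiability of $\exists\ue\,\phi$, whose cover is $\bot$.) For the relational literals I would separately verify that the native Resolution rule suffices: $R(\bar t)\wedge\neg R(\bar s)$ is $T$-inconsistent precisely when all $t_i=s_i$ hold, so its contribution to every residue is captured by the disjunction $\bigvee_j u_j\neq u'_j$ conjoined in step (5), provided marked variables are treated as parameters when running $E$.

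For the complexity bound the key structural fact, already observed before the statement, is that with only unary function symbols every constraint stays \emph{empty} throughout the run, the sole exception being the clauses $\bot\,\Vert\,\{u_1=u'_1,\dots,u_m=u'_m\}$ produced by Resolution; there is therefore no recursive constraint mechanism and no blow-up. I would then bound each step: step (1) makes $S$ RS-closed in $O(n\log n)$ by a congruence-closure algorithm (or $O(n^2)$ naively); step (2) performs at most $n$ marking rounds, each a linear scan, hence $O(n^2)$; steps (3) and (4) are linear scans; step (5) scans, for each of the $O(n)$ relational literals carrying an unmarked variable, the set $S$ for a complementary atom and runs the cheap unary-signature procedure $E$, costing $O(n^2)$ overall; step (6) is linear. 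Summing yields $O(n^2)$. It is essential that the answer is returned in dag-format: fully unravelling the acyclic definitions could duplicate shared subterms and destroy the polynomial bound, whereas the shared representation keeps the size quadratic.

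The main obstacle I anticipate is the correctness half rather than the timing. The delicate point is to justify that retaining the marked variables as existentially quantified together with their definitions, instead of eliminating them, still yields exactly the cover: one must show that the marking captures \emph{all and only} the definable variables and that the dag-format genuinely unravels to the $\ue$-free consequences of the full Constrained Superposition run, so that handling relations natively through Resolution loses nothing with respect to the encoding $R(\bar t)=true$ underlying Theorem~\ref{thm:main}.
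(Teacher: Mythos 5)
Your proposal is correct and takes essentially the same route as the paper: the complexity accounting you give (step (1) in $O(n\log n)$ via congruence closure, steps (2) and (5) quadratic, all remaining steps linear, total $O(n^2)$) coincides point-by-point with the paper's own proof, which in fact consists of nothing more than this accounting. The correctness half you develop --- saturation guaranteed by Lemma~\ref{lem:restr}, extraction of the $\ue$-free clauses justified by Theorem~\ref{thm:main}, the marking loop identifying exactly the variables that Demodulation would eliminate, and the observation that native Resolution reproduces what Superposition Left plus Reflexion yield under the $R(\bar t)=true$ encoding --- is precisely the justification the paper leaves implicit in the discussion preceding the proposition, so it is a sound elaboration rather than a divergence.
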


\begin{proof}
The preprocessing step (1) requires an abstraction phase for producing $\ue$-flat literals and a second phase in order to get a RS-closed set:
 the first phase requires linear time, whereas the second one requires $O(n\cdot log(n))$ time. All the remaining steps require linear time, except steps 
 (2) and (5) that requires quadratic time. This is the dominating cost, thus
 the entire procedure requires $O(n^2)$ time.
\end{proof}

Although we do not deeply investigate the problem here, we conjecture that it might be possible to further lower down the above complexity to $O(n\cdot log(n))$.

        \subsection{An extension}\label{subsec:extension}
        
        We consider a useful extension of the above algorithm; 
        let us assume that we have a nonempty theory whose  axioms are~\eqref{eq:null}, namely
        \begin{equation*}
	 \forall x~(x = \nullv \leftrightarrow f(x) = \nullv)
	\end{equation*}
        for every function symbol $f$.
        
        One side of the above axiom is equivalent to the ground literal $f(\nullv)=\nullv$ and as such it does not interfer with the completion process  (we just add it to our constraints from the very beginning). 
        
        To accommodate the other side, we need to modify our Calculus. First, we add  to the 
        Constrained Superposition Calculus of Section~\ref{sec:SC}
        the following extra Rule

    \noindent\begin{tabularx}{\textwidth}{cCl}
    \\[3pt]
\hspace{8mm}  \begin{tabular}{@{}c@{}}
\bfseries Inference Rule $Ext(\nullv)$\\
\bfseries (Constrained)
\end{tabular}
\hspace{8mm}
&
$\begin{array}{c}
f(e_j)=u(\uy)\,\Vert\, D
\\
\hline  e_j= \nullv\,\Vert\, D\cup \{u(\uy)=\nullv\}
\\
\end{array}$
&
\\[12pt]
 \end{tabularx}       
    \noindent    
The Rule is sound because $ u(\uy)=\nullv \wedge  f(e_j)=u(\uy) \to   e_j= \nullv$ follows from the axioms~\eqref{eq:null}.
For cover comuptation with our new axioms, we need a restricted version of Paramodulation Rule:

\begin{tabularx}{\textwidth}{cCl}
\\[3pt]
\begin{tabular}{@{}c@{}}
\bfseries Paramodulation \\
\bfseries (Constrained)
\end{tabular}
&
$\begin{array}{c}
e_j=r\,\Vert\, C\quad\quad L\,\Vert\, D
\\
\hline  L[r]_p \, \Vert \, C\cup D
\\
\end{array}$
~~(\text{if} $e_j>r ~\& ~L_{\vert p}\equiv e_j$) \\
&
\\[12pt]
\end{tabularx}
        
  \noindent
Notice that we can have $e_j>r$ only in case $r$ is either some existential variable $e_i$ or  it is an $\ue$-free term $u(\uy)$. Paramodulation Rule (if it is not a Superposition) can only apply to a right member of an equality and such a right member must be $e_j$ itself (because our literals are flat).
Thus the rule cannot introduce new terms and consequently it does not compromize the termination argument of Proposition~\ref{lem:R}.

\emph{The proof of Theorem~\ref{thm:main} can be easily adjusted as follows.} We proceed as in the proof of Theorem~\ref{thm:main}, so  as to obtain the  set $\Delta(\cM)\cup R \cup \tilde S$ which is saturated in the standard (unconstrained) Superposition Calculus . Below, we refer  to the general refutational completeness proof of the Superposition Calculus given in~\cite{NR}. Since we only have unit literals here, in order to produce a model of $\Delta(\cM)\cup R \cup \tilde S$, we can just consider the convergent ground rewriting system $\rightarrow$ consisting of the oriented equalities in $\Delta^+(\cM)\cup R \cup \tilde S$: the support of such model is formed by the $\rightarrow$-normal forms of our ground terms with the obvious interpretation for the function and constant symbols. For simplicity, we assume that $\nullv$ is in normal form.~\footnote{To be pedantic, according to  the definition of $\Delta^+(\cM)$, there should be an equality $\nullv = c_0$ in $\Delta^+(\cM)$ so that $c_0$ is the normal form  of $\nullv$.} We need to check that whenever we have\footnote{We use $\rightarrow^*$ for the reflexive-transitive closure of $\rightarrow$ and $\rightarrow^+$ for the transitive closure of $\rightarrow$.} $f(t)\rightarrow^* \nullv$ then we have also $t\rightarrow^* \nullv$: we prove this by induction on the reduction ordering for our ground terms. 
 Let $t$ be a  term such that $f(t)\rightarrow^* \nullv$: if $t$ is $\ue$-free then the claim is trivial (because the axioms~\eqref{eq:null} are supposed to hold in $\cM$). Suppose also that induction hypothesis applies to all terms smaller than $t$. If $t$ is not in normal form, then let $\tilde t$ be its normal form; then we have $f(t)\rightarrow^+ f(\tilde t)\rightarrow^* \nullv$, by the fact that $\rightarrow$ is convergent. By induction hypothesis, $\tilde t\rightarrow \nullv$, hence $t\rightarrow^+ \tilde t\rightarrow^* \nullv$, as desired.
 %
Finally, let us consider the case in which $t$ is in normal form; since $f(t)$ is reducible in root position by some rule $l\to r$, our rules $l\to r$ are $\ue$-flat and $t$ is not $\ue$-free, we have that $t\equiv e_j$ for some existential variable $e_j$. Then, we must have that $S$ contains an equality of the kind $f(e_j)=u(\uy)\, \Vert\, D$ or of the kind $f(e_j)=e_i\, \Vert\, D$
(the constraint $D$ being true in $\cM$ under the given assignment to the $\uy$).
The latter case is reduced to the former,  since $e_i\rightarrow^* \nullv$ (by the convergence of $\rightarrow^*$) and since $S$ is closed under Paramodulation. In the former case, by 
the rule $Ext(\nullv)$, we must have that $S$ contains $e_j=\nullv\, \Vert\, D\cup \{u(\uy)= \nullv\}$. Now, since $f(e_j)=u(\uy)\, \Vert\, D$ belongs to $S$ and $D$ is true in $\cM$, we have that the normal forms of $f(e_j)$ and of $u(\uy)$ are the same; since the normal form of $f(e_j)$ is $\nullv$, the normal form of $u(\uy)$ is $\nullv$ too, which means that $u(\uy)=\nullv$ is true in $\cM$. But 
$e_j=\nullv\, \Vert\, D\cup \{u(\uy)= \nullv\}$ belongs to $S$, hence $e_j=\nullv$ belongs to $\tilde S$, which implies $e_j\rightarrow^* \nullv$, as desired.
$\hfill\dashv$

\subsection{Remarks on \textsc{MCMT} implementation}\label{subsec:mcmt}

As evident from  Subsection~\ref{sec:readonly}, our main motivation for investigating covers originated from the verification of data-aware processes. Such applications require database (DB) signatures to contain only unary function symbols (besides relations of every arity). We  observed that computing covers of primitive formulae in such signatures requires only polynomial time. In addition, if relation symbols are at most binary,
%
\emph{the cover of a primitive formula is a conjunction of literals}: this is crucial in applications, because model checkers like \textsc{mcmt}~\cite{mcmt} and \textsc{cubicle}~\cite{cubicle_cav} represent sets of reachable states as primitive formulae. 
This makes cover computations a quite attractive technique in database-driven model checking.

Our cover algorithm for DB signatures has been implemented in the model checker \textsc{mcmt}.
The implementation is however still partial, nevertheless the tool is able to compute covers 
for the $EUF$-fragment with unary function symbol, unary relations and binary relations. The optimized procedute of Section~\ref{app:complexity} has not yet been implemented, instead \textsc{mcmt} uses a customary Knuth-Bendix completion (in fact, for the above mentioned fragments constraints are always trivial and our constrained Superposition Calculus essentially boils down to Knuth-Bendix completion for ground literals in $EUF$.

Axioms~\eqref{eq:null}  are also covered in the following way. We assume that constraints of which we want to compute the cover always contains either 
         the literal $e_j=\nullv$ or the literal $e_j\neq \nullv$ for every existential variable $e_j$. Whenever a constraint contains the literal $e_j\neq \nullv$, the completion procedure adds the literal $u(y_i)\neq \nullv$ whenever it had produced a literal of the kind 
         $f(e_j)= u(y_i)$.\footnote{This is sound because $e\neq \nullv$ implies $f(e)\neq \nullv$ according to~\eqref{eq:null}, 
        so $u(y_i)\neq \nullv$ follows from $f(e_j)= u(y_i)$ and $e\neq \nullv$.}
        
        We wonder whether we are justified in assuming that all onstraints of which we want to compute the cover always contains either 
         the literal $e_j=\nullv$ or the literal $e_j\neq \nullv$ for every existential variable $e_j$. The answer is the affirmative: according to the backward search algorithm 
        implemented in array-based systems tools,
        the variable $e_j$ to be eliminated always comes from the guard of a transition and we can assume that such a guard contains  the literal  $e_j\neq \nullv$ (if we need a transition with $e_j= \nullv$ - for an existentially quantified variable $e_j$ - it is possible to write trivially this condition without using a quantified variable).  
        The \textsc{mcmt} User Manual (available from the distribution) contains precise instructions on how to write specifications following the above prescriptions.

A first experimental evaluation (based on the existing benchmark provided in \cite{verifas}, which samples 32 real-world BPMN workflows taken from the BPMN official website \url{http://www.bpmn.org/}) is described in~\cite{CGGMR18}.
The benchmark set is available as part of the last distribution~2.8 of
\textsc{mcmt}
%
\url{http://users.mat.unimi.it/users/ghilardi/mcmt/}
%
 (see the subdirectory \texttt{/examples/dbdriven} of the distribution). The User Manual, also included in the distribution,
 contains a dedicated section giving essential information on how to encode relational artifact systems (comprising \emph{both first order and second order variables}) in \textsc{mcmt}
 specifications and how to produce
 user-defined examples in the database driven framework. 
 The first experiments were very encouraging: the tool was able to solve in few seconds all the proposed benchmarks  and the cover computations generated automatically during model-checking search were discharged instantaneously, see~\cite{CGGMR18}  for more information about our experiments.

 \section{Conclusions and Future Work}\label{sec:conclusions}

 The above experimental setup motivates new research 
 to extend Proposition~\ref{prop:mc} to further theories axiomatizing integrity constraints used in DB applications. Combined cover algorithms (along the perspectives in~\cite{GM})  could be crucial also in this setting. 
 Practical algorithms for the computation of covers in the theories falling under the hypotheses of Proposition~\ref{prop:mc} need to be designed: as a little first example, in Subsection~\ref{subsec:extension} above we showed how to handle Axiom~\eqref{eq:null} by light modifications to our techniques. Symbol elimination of function and predicate variables should also be combined with cover computations.
 
 We consider the present work, together with \cite{CGGMR18,CGGMR19}, as the starting point for a full line of research
dedicated to SMT-based techniques for the effective verification of data-aware
processes, addressing richer forms of verification beyond safety (such as
liveness, fairness, or full LTL-FO) and richer classes of artifact systems,
(e.g., with concrete data types and arithmetics), while identifying novel
decidable classes (e.g., by restricting the structure of the DB and of
transition and state formulae) beyond the ones presented in~\cite{CGGMR18,CGGMR19}.
Concerning implementation, we plan to further develop our tool to incorporate
in it
the plethora of optimizations and sophisticated search strategies available in
infinite-state SMT-based model
checking.  Finally, in~\cite{BPM19} we tackle more conventional process modeling
notations, concerning in particular data-aware extensions of the de-facto
standard BPMN\footnote{\url{http://www.bpmn.org/}}: we plan to provide a full-automated translator from the data-aware BPMN model presented in~\cite{BPM19} to the artifact systems setting of~\cite{CGGMR19}.

\bibliographystyle{abbrv}
\bibliography{mcmt}

\begin{thebibliography}{10}

\bibitem{BGT06}
F.~Baader, S.~Ghilardi, and C.~Tinelli.
\newblock A new combination procedure for the word problem that generalizes
  fusion decidability results in modal logics.
\newblock {\em Inform. and Comput.}, pages 1413--1452, 2006.

\bibitem{BaNi98}
F.~Baader and T.~Nipkow.
\newblock {\em Term Rewriting and All That}.
\newblock Cambridge University Press, United Kingdom, 1998.

\bibitem{BG}
L.~Bachmair and H.~Ganzinger.
\newblock Rewrite-based equational theorem proving with selection and
  simplification.
\newblock {\em J. Log. Comput.}, 4(3):217--247, 1994.

\bibitem{basic}
L.~Bachmair, H.~Ganzinger, C.~Lynch, and W.~Snyder.
\newblock Basic paramodulation.
\newblock {\em Inform. and Comput.}, 121(2):172--192, 1995.

\bibitem{Uwe94}
L.~Bachmair, H.~Ganzinger, and U.~Waldmann.
\newblock Refutational theorem proving for hierarchic first-order theories.
\newblock {\em Appl. Algebra Eng. Commun. Comput.}, 5:193--212, 1994.

\bibitem{Uwe13}
P.~Baumgartner and U.~Waldmann.
\newblock Hierarchic superposition with weak abstraction.
\newblock In {\em Proc.\ of {CADE}}, pages 39--57, 2013.

\bibitem{boj}
M.~Boja{\'{n}}czyk, L.~Segoufin, and S.~Toru{\'{n}}czyk.
\newblock Verification of database-driven systems via amalgamation.
\newblock In {\em Proc.\ of PODS}, pages 63--74, 2013.

\bibitem{BGR14}
R.~Bruttomesso, S.~Ghilardi, and S.~Ranise.
\newblock Quantifier-free interpolation in combinations of equality
  interpolating theories.
\newblock {\em {ACM} Trans. Comput. Log.}, 15(1):5:1--5:34, 2014.

\bibitem{CaDM13}
D.~Calvanese, G.~De~Giacomo, and M.~Montali.
\newblock Foundations of data aware process analysis: {A} database theory
  perspective.
\newblock In {\em Proc.\ of PODS}, 2013.

\bibitem{CGGMR18}
D.~Calvanese, S.~Ghilardi, A.~Gianola, M.~Montali, and A.~Rivkin.
\newblock Verification of data-aware processes via array-based systems
  (extended version).
\newblock Technical Report arXiv:1806.11459, arXiv.org, 2018.

\bibitem{BPM19}
D.~Calvanese, S.~Ghilardi, A.~Gianola, M.~Montali, and A.~Rivkin.
\newblock Formal modeling and {SMT}-based parameterized verification of
  data-aware {BPMN}.
\newblock In {\em Proc.\ of {BPM}}, 2019.

\bibitem{CGGMR19}
D.~Calvanese, S.~Ghilardi, A.~Gianola, M.~Montali, and A.~Rivkin.
\newblock From model completeness to verification of data aware processes.
\newblock In {\em Description Logic, Theory Combination, and All That}.
  Springer, 2019.

\bibitem{CADE19}
D.~Calvanese, S.~Ghilardi, A.~Gianola, M.~Montali, and A.~Rivkin.
\newblock Model completeness, covers and superposition.
\newblock In {\em Proc.\ of {CADE}}, 2019.

\bibitem{CK}
C.-C. Chang and J.~H. Keisler.
\newblock {\em Model Theory}.
\newblock North-Holland Publishing Co., Amsterdam-London, third edition, 1990.

\bibitem{cubicle_cav}
S.~Conchon, A.~Goel, S.~Krstic, A.~Mebsout, and F.~Za{\"{\i}}di.
\newblock Cubicle: {A} parallel smt-based model checker for parameterized
  systems - tool paper.
\newblock In {\em Proc.\ of {CAV}}, pages 718--724, 2012.

\bibitem{DHPV09}
A.~Deutsch, R.~Hull, F.~Patrizi, and V.~Vianu.
\newblock Automatic verification of data-centric business processes.
\newblock In {\em Proc.\ of ICDT}, pages 252--267, 2009.

\bibitem{DeLV16}
A.~Deutsch, Y.~Li, and V.~Vianu.
\newblock Verification of hierarchical artifact systems.
\newblock In {\em Proc.\ of PODS}, pages 179--194. ACM Press, 2016.

\bibitem{Ghil05}
S.~Ghilardi.
\newblock Model theoretic methods in combined constraint satisfiability.
\newblock {\em J. Autom. Reasoning}, 33(3-4):221--249, 2004.

\bibitem{GG17}
S.~Ghilardi and A.~Gianola.
\newblock Interpolation, amalgamation and combination (the non-disjoint
  signatures case).
\newblock In {\em Proc. {FroCoS}}, pages 316--332. Springer LNAI, 2017.

\bibitem{GG18}
S.~Ghilardi and A.~Gianola.
\newblock Modularity results for interpolation, amalgamation and
  superamalgamation.
\newblock {\em Ann. Pure Appl. Logic}, 2018.

\bibitem{GNZ08}
S.~Ghilardi, E.~Nicolini, and D.~Zucchelli.
\newblock A comprehensive framework for combined decision procedures.
\newblock {\em {ACM} Trans. Comput. Log.}, pages 1--54, 2008.

\bibitem{mcmt}
S.~Ghilardi and S.~Ranise.
\newblock {MCMT:} {A} model checker modulo theories.
\newblock In {\em Proc.\ of {IJCAR}}, pages 22--29, 2010.

\bibitem{GvG16}
S.~Ghilardi and S.~J. van Gool.
\newblock Monadic second order logic as the model companion of temporal logic.
\newblock In {\em Proc. {LICS}}, pages 417--426, 2016.

\bibitem{GvG17}
S.~Ghilardi and S.~J. van Gool.
\newblock A model-theoretic characterization of monadic second order logic on
  infinite words.
\newblock {\em J. Symb. Log.}, 82(1):62--76, 2017.

\bibitem{GZ}
S.~Ghilardi and M.~Zawadowski.
\newblock {\em Sheaves, games, and model completions}, volume~14 of {\em Trends
  in Logic---Studia Logica Library}.
\newblock Kluwer Academic Publishers, Dordrecht, 2002.
\newblock A categorical approach to nonclassical propositional logics.

\bibitem{GM}
S.~Gulwani and M.~Musuvathi.
\newblock Cover algorithms and their combination.
\newblock In {\em Proc.\ of {ESOP}, Held as Part of {ETAPS}}, pages 193--207,
  2008.

\bibitem{pdr}
K.~Hoder and N.~Bj{\o}rner.
\newblock Generalized property directed reachability.
\newblock In {\em Proc.\ of {SAT}}, pages 157--171, 2012.

\bibitem{HR}
J.~Hsiang and M.~Rusinowitch.
\newblock Proving refutational completeness of theorem-proving strategies: The
  transfinite semantic tree method.
\newblock {\em J. {ACM}}, 38(3):559--587, 1991.

\bibitem{kapurCC}
D.~Kapur.
\newblock Shostak's congruence closure as completion.
\newblock In {\em Rewriting Techniques and Applications, 8th International
  Conference, RTA-97, Sitges, Spain, June 2-5, 1997, Proceedings}, pages
  23--37, 1997.

\bibitem{kapur}
D.~Kapur.
\newblock Nonlinear polynomials, interpolants and invariant generation for
  system analysis.
\newblock In {\em Proc.\ of the 2nd International Workshop on Satisfiability
  Checking and Symbolic Computation co-located with {ISSAC}}, 2017.

\bibitem{voronkov}
L.~Kov{\'{a}}cs and A.~Voronkov.
\newblock Interpolation and symbol elimination.
\newblock In {\em Proc.\ of {CADE}}, pages 199--213, 2009.

\bibitem{verifas}
Y.~Li, A.~Deutsch, and V.~Vianu.
\newblock {VERIFAS:} {A} practical verifier for artifact systems.
\newblock {\em {PVLDB}}, 11(3):283--296, 2017.

\bibitem{LIP}
P.~Lipparini.
\newblock Locally finite theories with model companion.
\newblock In {\em Atti della Accademia Nazionale dei Lincei. Classe di Scienze
  Fisiche, Matematiche e Naturali. Rendiconti, Serie 8}, volume~72. Accademia
  Nazionale dei Lincei, 1982.

\bibitem{LudW07}
M.~Ludwig and U.~Waldmann.
\newblock An extension of the {K}nuth-{B}endix ordering with lpo-like
  properties.
\newblock In {\em Proc.\ of {LPAR}}, pages 348--362, 2007.

\bibitem{McM}
K.~L. McMillan.
\newblock Lazy abstraction with interpolants.
\newblock In {\em Proc.\ of {CAV}}, pages 123--136, 2006.

\bibitem{NeOp}
G.~Nelson and D.~C. Oppen.
\newblock Fast decision procedures based on congruence closure.
\newblock {\em J. {ACM}}, 27(2):356--364, 1980.

\bibitem{NRR09}
E.~Nicolini, C.~Ringeissen, and M.~Rusinowitch.
\newblock Data structures with arithmetic constraints: a non-disjoint
  combination.
\newblock In {\em Proc. {FroCoS}}, pages 319--334. Springer LNCS, 2009.

\bibitem{NRRtac09}
E.~Nicolini, C.~Ringeissen, and M.~Rusinowitch.
\newblock Satisfiability procedures for combination of theories sharing integer
  offsets.
\newblock In {\em Proc. {TACAS}}. Springer LNCS, 2009.

\bibitem{NRR10}
E.~Nicolini, C.~Ringeissen, and M.~Rusinowitch.
\newblock Combining satisfiability procedures for unions of theories with a
  shared counting operator.
\newblock {\em Fund. Inform.}, pages 163--187, 2010.

\bibitem{IC}
R.~Nieuwenhuis and A.~Oliveras.
\newblock Fast congruence closure and extensions.
\newblock {\em Inf. Comput.}, 205(4):557--580, 2007.

\bibitem{basic1}
R.~Nieuwenhuis and A.~Rubio.
\newblock Theorem proving with ordering and equality constrained clauses.
\newblock {\em J. Symb. Comput.}, 19(4):321--351, 1995.

\bibitem{NR}
R.~Nieuwenhuis and A.~Rubio.
\newblock Paramodulation-based theorem proving.
\newblock In {\em Handbook of Automated Reasoning (in 2 volumes)}, pages
  371--443. MIT Press, 2001.

\bibitem{pitts}
A.~M. Pitts.
\newblock On an interpretation of second order quantification in first order
  intuitionistic propositional logic.
\newblock {\em J. Symb. Log.}, 57(1):33--52, 1992.

\bibitem{RV}
T.~Rybina and A.~Voronkov.
\newblock A logical reconstruction of reachability.
\newblock In {\em Perspectives of Systems Informatics, 5th International Andrei
  Ershov Memorial Conference, {PSI} 2003, Revised Papers}, pages 222--237,
  2003.

\bibitem{SS16}
V.~Sofronie-Stokkermans.
\newblock On interpolation and symbol elimination in theory extensions.
\newblock In {\em Proc. {IJCAR}}, Lecture Notes in Computer Science. Springer,
  2016.

\bibitem{Stokker18}
V.~Sofronie{-}Stokkermans.
\newblock On interpolation and symbol elimination in theory extensions.
\newblock {\em Log. Methods Comput. Sci.}, 14(3), 2018.

\bibitem{Vian09}
V.~Vianu.
\newblock Automatic verification of database-driven systems: a new frontier.
\newblock In {\em Proc.\ of ICDT}, pages 1--13, 2009.

\bibitem{wheeler}
W.~H. Wheeler.
\newblock Model-companions and definability in existentially complete
  structures.
\newblock {\em Israel J. Math.}, 25(3-4):305--330, 1976.

\end{thebibliography}

\end{document}